\tikzset{node distance=2cm, text width=1cm, align=center, auto}
\newcommand{\includeframe}[4]{\makebox[#2\linewidth]{\includegraphics[page=#1,width=#2\linewidth,trim=0cm 0cm 0cm 0cm,clip=true,#3]{#4}}}
\newtheorem{theorem}{Theorem}
\newtheorem{example}{Example}
\newtheorem{remark}{Remark}
\newtheorem{proposition}{Proposition}
\newtheorem{definition}{Definition}
\newtheorem{lemma}{Lemma}
\newcommand{\bdd}{BDD}
\newcommand{\fdd}{\texttt{FDD}}
\newcommand{\sdd}{SDD}
\newcommand{\robdd}{ROBDD}
\newcommand{\robddn}{ROBDD+N}
\newcommand{\zdd}{ZDD}
\newcommand{\tbdd}{TBDD}
\newcommand{\esr}{\texttt{ESRBDD}}
\newcommand{\esrlz}{\esr$-\texttt{L}_0$}
\newcommand{\chainD}{\texttt{Chain-DD}}
\newcommand{\chainB}{\texttt{Chain-BDD}}
\newcommand{\chainZ}{\texttt{Chain-ZDD}}
\newcommand{\ldd}{\lambda\texttt{DD}}
\newcommand{\ls}{\ldd\texttt{-S}}
\newcommand{\ldds}{\ldd\texttt{-S}}
\newcommand{\lsn}{\ldd\texttt{-S-N}}
\newcommand{\lddoU}{\ldd\texttt{-O-U}}
\newcommand{\lddo}{\ldd\texttt{-O}}
\newcommand{\lddoNU}{\ldd\texttt{-O-NU}}
\newcommand{\lddoCuz}{\ldd\texttt{-O-C}_{10}}
\newcommand{\lddoUCuz}{\ldd\texttt{-O-UC}_{10}}
\newcommand{\lddoNUCuzuu}{\ldd\texttt{-O-NUC}_{10}\texttt{C}_{11}}
\newcommand{\lddoUCz}{\ldd\texttt{-O-UC}_{0}}
\newcommand{\lddoNUC}{\ldd\texttt{-O-NUC}}
\newcommand{\lddoUC}{\ldd\texttt{-O-UC}}
\newcommand{\lddoNUCX}{\ldd\texttt{-O-NUCX}}
\newcommand{\N}{\mathbb{N}}
\newcommand{\B}{\mathbb{B}}
\newcommand{\F}[1]{{\B^{{#1}\to 1}}}
\newcommand{\zero}{\mathit{0}}
\newcommand{\un}{\mathit{1}}
\newcommand{\uu}{\mathsf{u}}
\newcommand{\xx}{\mathsf{x}}
\newcommand{\cc}{\mathsf{c}}
\newcommand{\xor}{\oplus}
\newcommand{\abs}[1]{\lvert#1\rvert}
\newcommand{\semb}[1]{\llbracket#1\rrbracket}
\newcommand{\DAGaml}{\texttt{DAGaml}}
\newcommand{\nc}[1]{{{#1}^{\natural}}}
\begin{document}

\title{Ordered Functional Decision Diagrams: \\
A Functional Semantics For Binary Decision Diagrams}

\author{Joan Thibault \\ \href{mailto:joan.thibault@inria.fr}{joan.thibault@inria.fr} 
   \and Khalil Ghorbal \\ \href{mailto:khalil.ghorbal@inria.fr}{khalil.ghorbal@inria.fr} }

\maketitle

\begin{abstract}
We introduce a novel framework, termed $\ldd$, that revisits Binary Decision Diagrams from a purely functional point of view.
The framework allows to classify the already existing variants, including the most recent ones like \chainD{} and \esr{}, as implementations of a special class of ordered models. 
We enumerate, in a principled way, all the models of this class and isolate its most expressive model.
This new model, termed $\lddoNUCX$, is suitable for both dense and sparse Boolean functions, and is moreover \emph{invariant} by negation.
The canonicity of $\lddoNUCX$ is formally verified using the Coq proof assistant.
We furthermore give bounds on the size of the different diagrams: the potential gain achieved by more expressive models can be at most linear in the number of variables $n$. 
\end{abstract}

\section*{Introduction}
A Binary Decision Diagram (\bdd{}) is a versatile graph-based data structure, well suited to effectively represent and manipulate Boolean functions.
As shown by Bryant~\cite{Bryant1986}, although a Binary Decision Diagram (\bdd{}) has an exponential worst-case size, many practical applications yield more concise representations thanks to the elimination of \emph{useless} nodes, i.e., those nodes having their outgoing edges pointing towards the same subgraph.
Many \bdd{} variants~\cite{BurchLong1992,MinatoVariants,RolfVariantDual} have been subsequently designed to capture specific application-dependent properties in order to further reduce the size of the diagram or to efficiently perform specific operations.
For instance, Zero-suppressed Decision Diagrams~\cite{IntroZDD,ZDD}, or \zdd{}, form a notable variant that is well suited to encode sparse functions, i.e., functions that evaluate to zero except for a limited number of valuations of their inputs.

Most recently, two new variants, namely \chainB{}~\cite{CBDD} and \esr{}~\cite{ESRBDD-TACAS2019}, propose to combine \robdd{} and \zdd{} in order to get a data structure suitable for both dense and sparse functions.
In this work, we are also interested in combining existing variants in order to benefit from their respective sweet spots.

Our approach is however, drastically different: we combine reduction rules by composing their \emph{functional abstraction} (or interpretation).
To do so, we introduce a new functional framework, together with its related data structure, that we term $\ldd$.
Special variables, like useless variables, are captured by elementary operators (or functors) acting on Boolean functions. 
We exemplify our approach by considering the so-called \emph{canalizing} variables~\cite{canalising}, which form an important class of special variables dual to useless variables: their valuation fixes the output of the function regardless of the valuation of the other variables.
\emph{In our framework, designing a data structure that captures several special variables amounts to combining, at the functional level, various elementary operators, while paying attention to their possible interactions.}

The functional framework allows not only to compare the expressive power of the modeled variants, but also, and more importantly, to design in a principled way new models with higher compression rates.
We present in particular a new canonical data structure, termed $\lddoNUCX$, that combines canalizing and useless variables while
supporting negation, unlike \zdd{}, \chainB{}, and \esr{}.
The obtained graphs are \emph{invariant} by negation, i.e., the diagram of the negation of a function differs from the diagram of the function itself by only appending the symbol that encodes negation.
As a consequence, negation is a constant-time operation. 

 
The three main contributions of this paper can be summarized as follows. 
(I) A general functional framework for Boolean functions relying on the Shannon combinator (Section~\ref{sec:omodels}) for a class of \emph{ordered} models, denoted $\lddo$, classifying many already existing \bdd{} variants (Section~\ref{sec:dd-unify}).
(II) A new model, called $\lddoNUCX$ (Section \ref{sec:onucx}), supporting all the primitives defining the class $\lddo$, including negation (Section \ref{sec:negation}). 
$\lddoNUCX$ is a strict generalization of the recent variants $\chainD$ and $\esr$.
(III) A comparison of the number of nodes showing that the potential gain can be at most linear in $n$ regardless of the expressiveness of the models. 

\begin{remark}
All results and theorems in the upcoming sections are part of the formalization project of the $\ldd$ data structure in the Coq proof assistant. 
For lack of space, the formalization details wont be detailed. 
\end{remark}

\section{Preliminaries}\label{sec:bf}
A Boolean function of arity $n \in \N$ is a form (or functional) from $\B^n$ to $\B$.
It operates on an ordered tuple of Booleans of dimension $n$, $(x_{0},\dotsc,x_{n-1})$,
by assigning a Boolean value to each of the $2^n$ valuations of its tuple.
The set of Boolean functions of arity $n$, denoted by $\F{n}$, is thus finite and contains $2^{2^n}$ elements.
In particular, $\F{0}$ has two elements and is isomorphic to $\B$ itself (only the types differ: functions on the one hand, and co-domain elements, or Booleans, on the other hand).
To avoid confusion, we use a different font for functions: $\zero$ will denote the constant function of arity zero returning $0$, and $\un$ will denote the constant function of arity zero returning $1$.

We rely on a binary non-commutative operator (or functor), sometimes referred to as the Shannon operator in the literature, defined as follows.
\begin{definition}[Shannon operator]\label{def:star}
    Let $f,g$ be two Boolean functions defined over the same set of variable $x_1,\dotsc,x_n$. 
    The \emph{Shannon operator} $\star: \F{n} \times \F{n} \to \F{n+1}$ id defined as  
\[
    (f \star g): (x_0,x_{1},\dotsc,x_n) \mapsto (\lnot x_0 \land f(x_{1},\dotsc,x_n)) \lor (x_0 \land g(x_{1},\dotsc,x_n)) \enspace .
\]
\end{definition}
The Shannon operator is (i) \emph{universal}: any Boolean function can be fully decomposed, by induction over its arity, all the way down to constant functions; and (ii) \emph{elementary}:
it operates on two functions defined over the same set of variables and increases the arity by exactly one.
In our definition, this is done by appending a new variable, $x_0$, at position $0$, to the ordered tuple $(x_1,\dotsc,x_n)$.

Depending on the operands $f$ and $g$, this newly introduced variable can be of several \emph{types}.
Two kinds of variables will be of particular interest in this paper. 
\begin{definition}[Useless Variable]
    \label{def:uselessvar}
    For any function $f$, the newly added variable in $f \star f$ is said to be \emph{useless}. 
\end{definition}
\begin{definition}[Canalizing Variable]
    \label{def:canalizingvar}
    Let $f$ denote a function and let $c$ denote a constant function ($\zero$ or $\un$) having the same arity of $f$.  
    The newly added variable in $c \star f$ or $f \star c$ is said to be \emph{canalizing}.
\end{definition}
For instance, in $g: (x_0,x_1) \mapsto x_0 \land \neg x_1$, the variable $x_0$ is canalizing.
Indeed, let $f: x_1 \mapsto \lnot x_1$, then $g = \zero \star f$.
Canalizing variables are dual to useless variables in the sense that their valuations could fix the output of the entire function.
For the function $g$, one has $g(0,x_1) = 0$ regardless of $x_1$.

A key observation for useless and canalizing variables alike is that the Shannon operator acts on \emph{one} function $f$ and produces a new function by appending a typed variable (useless or canalizing) to the ordered list of inputs of $f$. 
As such, the binary operator behaves like a unary operator acting on Boolean functions. 
This simple observation is at the heart of the functional framework introduced next.

\section{Ordered Functional Decision Diagrams}\label{sec:omodels}
We introduce a new data structure, akin to ordered \bdd{}, that we term Ordered Functional Decision Diagram or $\ldd\texttt{-O}$.  

\subsection{Syntax and Semantics}\label{sec:oldd}
Let $\Delta$ denote a finite set of letters, and $\Delta^{*}$ denote the set of all words (freely) generated by concatenating any finite number of letters from $\Delta$. In particular, $\varepsilon \in \Delta^{*}$ denotes the empty word.
\begin{definition}[$\ldd\texttt{-O}$]
    A \emph{$\ldd\texttt{-O}$} $(\phi,n)$, $n \in \mathbb{N}$, is a parameterized recursive data structure defined as follows
    \begin{align*}
        (\phi,0) &:= (\zero,0) \mid (\un,0)     \\
        (\phi,n+1) &:= \ell (\phi,n) \mid (\phi,n) \diamond (\phi,n),
    \end{align*}
    where $\ell$ is a letter in $\Delta$.
    The letters in $\Delta$ as well as the binary operator $\diamond$ increase the parameter $n$ by exactly one.
    Notice that the two operands of $\diamond$ have necessarily the same parameter $n$.
\end{definition}
We drop the arity $n$ from the notation whenever clear from the context.  
A $\ldd\texttt{-O}$ can be represented as a directed acyclic graph, with (all) edges labeled with words in $\Delta^{*}$.
The representation requires three types of nodes: one root node ($\blacktriangledown$), two terminal nodes ($\square$ and $\blacksquare$), and a diamond node ($\Diamond$).
The structure is defined inductively as follows.
\begin{itemize}
    \item a root node pointing to a terminal node;
    \item or a root node pointing to a $\ldd\texttt{-O}$ graph;
    \item or a root node pointing to a diamond node having two outgoing edges, each of which pointing to a $\ldd\texttt{-O}$.
\end{itemize}
An edge can have only one label: composing $\xrightarrow{w_1}$ and $\xrightarrow{w_2}$ results in $\xrightarrow{w_1.w_2}$, where $w_1.w_2$ denotes the concatenation of $w_1$ and $w_2$.

The $\ldd\texttt{-O}$ data structure, or equivalently its graph representation, can be given, by induction, a semantics over Boolean functions.
An elementary unary operator acts on a Boolean function $f$ of arity $n \geq 0$ by appending a typed variable to the input of $f$ (regarded as an ordered tuple). For instance, the elementary operator $\delta_{\uu}: f \mapsto f \star f$ appends a useless variable to the input of $f$.
Similarly, each of the following elementary operators append a different canalizing variable
to the input of $f$. 
\begin{align*}
    &\delta_{\cc_{0\zero}}: f \mapsto \zero \star f &\delta_{\cc_{1\zero}}: f \mapsto f \star \zero \\
    &\delta_{\cc_{0\un}}: f \mapsto \un \star f &\delta_{\cc_{1\un}}: f \mapsto f \star \un
\end{align*}

\begin{definition}[Semantics of $\ldd\texttt{-O}$]
\label{def:semantics}
Let $(\phi,n)$ be a $\lddo$ graph.
\begin{itemize}
    \item $(\blacksquare,0)$ denotes the constant function $\zero$ of arity zero;
    \item $(\square,0)$ denotes the constant function $\un$ of arity zero;
    \item Letters in $\Delta$ denote elementary operators; 
    \item Concatenation of letters denote composition of operators;
    \item The empty word $\varepsilon$ denotes the identity operator; 
    \item The symbol $\diamond$ denotes the Shannon operator; 
    \item The parameter $n$ is the arity of the function;
\end{itemize}
We denote by $\semb{(\phi,n)}$ the Boolean function represented by $(\phi,n)$.  
\end{definition}

A \emph{model} of $\ldd\texttt{-O}$ is an instantiation of $\Delta$ with some letters.
For instance, the letters $\uu$ and $\cc_{1\zero}$ are used to encode the unary operators $\delta_\uu$ and $\delta_{\cc_{1\zero}}$, respectively; we use similar letters for the remaining canalizing variables.
The simplest possible model has no letters: $\Delta$ is empty and $\Delta^{*}$ contains one word of length zero, namely $\varepsilon$, which semantically corresponds to the identity operator $\delta_\varepsilon: f \mapsto f$.
The $\ldd\texttt{-O}$ graphs obtained for $\Delta = \emptyset$, after merging isomorphic subgraphs, are known in the literature as Shannon Decision Diagrams, or \sdd{}; we thus term this model $\ldd\texttt{-S}$.

\begin{example}[Running Example]
    \label{ex:run}
    The $\ldd\texttt{-S}$ graph of the Boolean function
    \[
        (x_0,x_1,x_2,x_3) \mapsto x_1 \xor x_2 \xor (\lnot x_0 \land x_3)
    \]
    is depicted in Figure~\ref{fig:s} where dashed and solid edges point respectively to the left and right operands of $\diamond$ (recall that the operator $\star$ represented by $\diamond$ is not commutative).
    For clarity, the terminal nodes are not merged and are explicitly labeled by their respective constant functions.
\end{example}
\begin{figure*}[th]
	\centering
	\subcaptionbox{$\ldd\texttt{-S}$\label{fig:s}}{\includeframe{1}{0.23}{trim=0cm 0cm 16.5cm 0cm}{figures/fig1}}
	\subcaptionbox{$\lddoUCuz{}$\label{fig:uc}} {\includeframe{2}{0.23}{trim=0cm 0cm 16.5cm 0cm}{figures/fig1}}
	\subcaptionbox{$\ldd\texttt{-O-NUC}$\label{fig:nuc}} {\includeframe{3}{0.23}{trim=0cm 0cm 16.5cm 0cm}{figures/fig1}}
	\subcaptionbox{$\lddoNUCX$\label{fig:nucx}} {\includeframe{14}{0.23}{trim=0cm 0cm 16.5cm 0cm}{figures/fig1}}
	\caption{$\ldd\texttt{-O}$ graphs for $(x_0,x_1,x_2,x_3) \mapsto x_1 \xor x_2 \xor (\neg x_0 \land x_3)$.}\label{fig:RE_oldd}
\end{figure*}
\begin{remark}
    Unlike \bdd{} variants, the diamond nodes in $\lddo$ are not labeled with the integers denoting the indices (or positions) of variables.
    Such information can be fully retrieved from the length of the words labeling the edges and the nesting depth of diamond nodes.
\end{remark}

The canonicity of the $\ldd\texttt{-S}$ data structure is obvious: each Boolean function has a unique $\ldd\texttt{-S}$ representation and every $\ldd\texttt{-S}$ represents unambiguously a unique Boolean function.
In the next section, we detail how such canonicity is achieved for non-trivial models ($\Delta \neq \emptyset$).

\subsection{Syntactic Reduction and Canonicity}
Each letter in $\Delta$ comes with an introduction rule.
For instance, an intuitive introduction rule for the letter `$\uu$' could be:
\[
    \inference[\textnormal{\texttt{intro-}$\uu$}]{(\phi,n) \diamond (\phi,n)}{\xrightarrow{\uu} (\phi,n)} \enspace .
\]
We use $\xrightarrow{\overbrace{\uu. \dots .\uu}^{n \text{ times}}}(\square,0)$ to denote $(\square,n)$ 
and a similar shorthand notation is used for $\blacksquare$.
This in turn allows the following introduction rules for the letters $\cc_{1\zero}$ and $\cc_{1\un}$:
\[
    \inference[\textnormal{\texttt{intro-}$\cc_{1\zero}$}]{(\phi,n) \diamond (\blacksquare,n)}{\xrightarrow{\cc_{1\zero}} (\phi,n)},\ 
    \inference[\textnormal{\texttt{intro-}$\cc_{1\un}$}]{(\phi,n) \diamond (\square,n)}{\xrightarrow{\cc_{1\un}} (\phi,n)}  \ .
\]
The letters $\cc_{0\zero}$, $\cc_{0\un}$ could be introduced similarly. 
\[
    \inference[\textnormal{\texttt{intro-}$\cc_{0\zero}$}]{(\blacksquare,n) \diamond (\phi,n)}{\xrightarrow{\cc_{0\zero}} (\phi,n)}, \ 
    \inference[\textnormal{\texttt{intro-}$\cc_{0\un}$}]{(\square,n) \diamond (\phi,n)}{\xrightarrow{\cc_{0\un}} (\phi,n)}
    \  .
\]
Notice that, although natural, the above intro rules are not unique. 
We say that a graph is \emph{reduced} w.r.t. a fixed set of intro rules if it is a fixed point for those rules. 
Depending on $\Delta$, the same graph may have distinct reduced representations.
For instance, if $\Delta$ contains both $\uu$ and $\cc_{1\zero}$, then with respect to the rules above 
the graph $(\blacksquare,0) \diamond (\blacksquare,0)$ reduces to 
either $\xrightarrow{\cc_{1\zero}} (\blacksquare,0)$ or $\xrightarrow{\uu} (\blacksquare,0)$. 
A simple way to avoid such non-determinism would be to apply the intro rules w.r.t. a fixed order.  
We shall see next, however, that there is an interplay between such an order and the way one defines the intro rules for the graph to capture syntactically all the semantic occurrences of the involved elementary operators. 

We focus in the sequel on the model $\lddoUC$ containing all useless and canalizing letters. 
We reduce the $\diamond$ node by 
applying a constructor, $\nc{\diamond}$, defined in infix notation as follows  
\begin{lstlisting}
let $\phi_0$ $\nc{\diamond}$ $\phi_1$ =
    match $\phi_0 \diamond \phi_1$ with
    | $\phi \diamond \phi$ $\camlarrow$ $\xrightarrow{\uu}  \phi$
    | $\phi \diamond (\square,n)$ $\camlarrow$ $\xrightarrow{\cc_{1\un}}  \phi$
    | $\phi \diamond (\blacksquare,n)$ $\camlarrow$ $\xrightarrow{\cc_{1\zero}} \phi$
    | $(\square,n) \diamond \phi$ $\camlarrow$ $\xrightarrow{\cc_{0\un}} \phi$
    | $(\blacksquare,n) \diamond \phi$ $\camlarrow$ $\xrightarrow{\cc_{0\zero}} \phi$
    | $\_$ $\camlarrow$ $\phi_0\diamond\phi_1$
\end{lstlisting}
where the first letter gets introduced first (if possible), then the second etc. 
Using $\nc{\diamond}$, we define a reduction operator, $[\cdot]$, inductively on the structure of the graph:  
\begin{lstlisting}
let rec  $[\phi]$ =
  match $\phi$ with
  | $(\blacksquare,0)$ | $(\square,0)$ $\camlarrow$ $\phi$ 
  | $\xrightarrow{\ell} \phi'$ $\camlarrow$ let $\psi_0 \diamond \psi_1 = \texttt{elim-}\ell (\phi')$ in $[\psi_0]\ \nc{\diamond}\ [\psi_1]$
  | $\phi_0 \diamond \phi_1$ $\camlarrow$ $[\phi_0]\ \nc{\diamond}\ [\phi_1]$
\end{lstlisting}
where $\texttt{elim-}\ell$ eliminates the letter $\ell$ by reversing its introduction rule. For instance,  
\[
    \inference[\textnormal{\texttt{elim-}$\cc_{0\zero}$}]{\xrightarrow{\cc_{0\zero}} (\phi,n)}{(\blacksquare,n) \diamond (\phi,n)} \enspace .
\]
\begin{lemma}
    \label{lem:idem}
    The reduction operator $[\cdot]$ is idempotent, that is $[[\phi]]=[\phi]$ for any graph $\phi$. 
\end{lemma}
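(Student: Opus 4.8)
The plan is to prove the slightly stronger statement that $[\phi]$ is a \emph{fixed point} of the reduction for every graph $\phi$; since the equation $[[\phi]]=[\phi]$ is exactly this statement, idempotence follows immediately. Call a graph $\psi$ \emph{reduced} when $[\psi]=\psi$. I would first isolate the single combinatorial fact on which everything hinges: that the smart constructor $\nc{\diamond}$ preserves reducedness, i.e. if $\alpha$ and $\beta$ are reduced then so is $\alpha \nc{\diamond} \beta$. Granting this key fact, the main claim follows by a routine structural induction on $\phi$. Terminals $(\square,0)$ and $(\blacksquare,0)$ are reduced by definition. For a diamond $\phi_0 \diamond \phi_1$ we have $[\phi_0 \diamond \phi_1] = [\phi_0] \nc{\diamond} [\phi_1]$, where $[\phi_0]$ and $[\phi_1]$ are reduced by the induction hypothesis. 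For an edge $\xrightarrow{\ell}\phi'$, the operands produced by $\texttt{elim-}\ell$ are each either a terminal or a copy of the strict subterm $\phi'$, so their reductions are reduced by the induction hypothesis. In every case $[\phi]$ is $\nc{\diamond}$ applied to two reduced graphs, and the key fact closes the argument.

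The core lemma I would prove by a direct case analysis on which branch of $\nc{\diamond}$ fires for the already reduced operands $\alpha,\beta$. If no letter is introduced, then $\alpha \nc{\diamond}\beta = \alpha \diamond \beta$ and $[\alpha\diamond\beta] = [\alpha]\nc{\diamond}[\beta] = \alpha\nc{\diamond}\beta$, because the guards of all five branches depend only on $\alpha$ and $\beta$, which are unchanged; hence the same branches fail on the return trip and the plain node is reproduced. If instead a letter $\ell$ is introduced, producing $\xrightarrow{\ell}\gamma$ with $\gamma \in\{\alpha,\beta\}$ reduced, I would unfold $[\xrightarrow{\ell}\gamma]$: $\texttt{elim-}\ell$ rebuilds exactly the diamond $\gamma_0\diamond\gamma_1$ that triggered the introduction, the operands $\gamma_0,\gamma_1$ are terminals or $\gamma$ and are therefore already reduced, and applying $\nc{\diamond}$ must re-introduce the \emph{same} letter $\ell$. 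This last point is the crux and must be verified one letter at a time.

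The main obstacle is precisely this round-trip consistency between $\texttt{elim-}\ell$ and the priority-ordered matching inside $\nc{\diamond}$. Because the branches are tried in the fixed order $\uu, \cc_{1\un}, \cc_{1\zero}, \cc_{0\un}, \cc_{0\zero}$, a letter $\ell$ is introduced only when all earlier branches fail, and those failures record distinguishing structural facts about $\gamma$ (for example, $\cc_{1\un}$ fires only when $\gamma\neq(\square,n)$, and $\cc_{0\un}$ only when moreover $\gamma\notin\{(\square,n),(\blacksquare,n)\}$). Since $\gamma$ is reduced, these facts survive the recursive reduction of the eliminated operands, so on the return trip the same earlier branches fail again and $\ell$ is re-introduced rather than any competing letter. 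This is exactly the non-determinism flagged after the \texttt{intro} rules, where $(\blacksquare,0)\diamond(\blacksquare,0)$ matches both $\uu$ and $\cc_{1\zero}$: the fixed priority order is what makes the reduction deterministic and, in turn, idempotent. Discharging each of the five letters in this manner is mechanical, but it is where all the care resides.
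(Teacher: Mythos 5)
Your proof is correct and takes essentially the same route as the paper, whose own proof is exactly the sketch you expand: structural induction on the graph, with the crux being that $\nc{\diamond}$ always introduces letters in the same fixed priority order, so eliminating a letter and re-normalizing its (already reduced) operands deterministically re-fires the same branch. The only loose end is minor: the operands $(\square,n)$ and $(\blacksquare,n)$ produced by $\texttt{elim-}\ell$ are not terminals but $\uu$-chains (hence not strict subterms covered by your induction hypothesis), so their reducedness needs a separate one-line induction on $n$ using the top priority of the $\uu$ branch of $\nc{\diamond}$.
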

The proof is by induction on the structure of the graph and relies on the fact that the letters are always introduced in the same order.  
The purpose of the $\texttt{elim-}\ell$ procedure is to deconstruct already introduced letters before reintroducing them w.r.t. to the implicit order of $\nc{\diamond}$. 
For instance $[\xrightarrow{\cc_{1\zero}}(\blacksquare,0)]$ gives $\xrightarrow{\uu}(\blacksquare,0)$ since $\cc_{1\zero}$ is eliminated before introducing $\uu$. 

The relative canonicity of $\ldd\texttt{-O}$ models is proved by induction on the arity $n$. 
\begin{proposition}[Relative Canonicity of $\lddoUC$]
    \label{prop:canonicity}
    Every Boolean function has a unique, semantically equivalent, $\lddoUC$ graph $\phi$ reduced w.r.t. $[\cdot]$ (that is $[\phi]=\phi$).  
\end{proposition}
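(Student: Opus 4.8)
The plan is to establish both halves of canonicity---\textbf{existence} of a reduced $\lddoUC$ graph for every Boolean function and its \textbf{uniqueness}---by a single induction on the arity $n$, leaning on the universality of the Shannon operator and on the idempotence of $[\cdot]$ (Lemma~\ref{lem:idem}). Before the induction I would record two routine facts. First, the \emph{semantic correctness of} $\nc{\diamond}$, namely $\semb{\phi_0 \mathbin{\nc{\diamond}} \phi_1} = \semb{\phi_0} \star \semb{\phi_1}$, proved by following the branches of the constructor: each introduced letter is sound by design (for example the branch returning $\xrightarrow{\cc_{1\un}}\phi$ is correct because $\semb{(\square,n)}=\un$ and $\delta_{\cc_{1\un}}\colon f\mapsto f\star\un$), while the fall-through branch is literally the Shannon operator. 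Second, since every $\texttt{elim-}\ell$ is by construction the inverse of the introduction rule of $\ell$, reduction preserves semantics, $\semb{[\phi]}=\semb{\phi}$, by an immediate structural induction.

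For \textbf{existence}, the base case $n=0$ is trivial: $\zero$ and $\un$ are represented by the already-reduced graphs $(\blacksquare,0)$ and $(\square,0)$. For the step, let $h$ have arity $n+1$ and take its Shannon decomposition $h=f\star g$, where $f,g$ are the two cofactors of arity $n$. By the induction hypothesis $f$ and $g$ have reduced graphs $\phi_f,\phi_g$; set $\phi:=\phi_f \mathbin{\nc{\diamond}} \phi_g$. Correctness of $\nc{\diamond}$ gives $\semb{\phi}=f\star g=h$, and $\phi$ is reduced because $\phi=[\phi_f \diamond \phi_g]$ lies in the image of $[\cdot]$, which is idempotent by Lemma~\ref{lem:idem}.

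For \textbf{uniqueness}, the entry point is a \emph{factorization} of reduced graphs that follows directly from idempotence: if $\psi$ is reduced of arity $n+1$, then unfolding $\psi=[\psi]$ on its top node (either $\psi_0\diamond\psi_1$, or $\xrightarrow{\ell}\psi'$ after applying $\texttt{elim-}\ell$) exhibits $\psi = \chi_0 \mathbin{\nc{\diamond}} \chi_1$ with $\chi_0,\chi_1$ reduced graphs of arity $n$. Semantic correctness of $\nc{\diamond}$ then yields $\semb{\chi_0}\star\semb{\chi_1}=h$, so $\semb{\chi_0}=f$ and $\semb{\chi_1}=g$ by uniqueness of the Shannon decomposition (the cofactors are determined by $h$). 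Consequently, given two reduced graphs $\psi,\psi''$ with $\semb{\psi}=\semb{\psi''}=h$, their factorizations have operands representing the same cofactors, so the induction hypothesis forces those operands to be syntactically equal, whence $\psi=\psi''$.

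I expect the \textbf{main obstacle} to be exactly the point the text flags as the interplay between the introduction order and the definition of the rules, i.e.\ that the \emph{syntactic} pattern matches in $\nc{\diamond}$ capture \emph{all} semantic occurrences of useless and canalizing top variables. On reduced operands this hinges on the induction hypothesis: $\semb{\chi_0}=\semb{\chi_1}$ forces $\chi_0=\chi_1$, and $\semb{\chi_1}=\un$ forces $\chi_1=(\square,n)$ (and symmetrically for the other constants), so the purely syntactic tests $\phi\diamond\phi$, $\phi\diamond(\square,n)$, \dots\ detect precisely the semantic conditions they are meant to. Were this equivalence to fail, a reduced graph with a $\diamond$-top could still hide a special top variable, breaking canonicity. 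The remaining overlap between simultaneously matching patterns (such as $(\blacksquare,n)\diamond(\blacksquare,n)$ matching three branches at once) is resolved deterministically by the fixed priority order of $\nc{\diamond}$; discharging this cleanly is finite but delicate book-keeping over the five letters, and it is exactly this compatibility that the Coq formalization is designed to certify.
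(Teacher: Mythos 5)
Your proof is correct and follows the paper's overall skeleton---induction on the arity, idempotence of $[\cdot]$ (Lemma~\ref{lem:idem}) for existence, and elimination of the top letter when comparing reduced graphs---but you organize both halves differently. For existence, the paper takes the unique $\ldds$ graph of $f$, observes it is also a $\lddoUC$ graph, and applies $[\cdot]$ once, idempotence guaranteeing reducedness; you instead build the reduced graph bottom-up from the Shannon cofactors via $\nc{\diamond}$ (noting, correctly, that $\phi_f \mathbin{\nc{\diamond}} \phi_g = [\phi_f \diamond \phi_g]$ when the operands are reduced, hence lies in the image of the idempotent $[\cdot]$). This route obliges you to state explicitly the two semantics-preservation facts (for $\nc{\diamond}$ and for $[\cdot]$) that the paper uses silently, which is a genuine improvement in rigor. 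For uniqueness, the paper performs a case analysis on the top shapes of $[\phi]$ and $\psi$ (letter versus diamond), deriving either a contradiction with reducedness or syntactic equality via the induction hypothesis; you replace this with a factorization lemma---every reduced graph of arity $n+1$ equals $\chi_0 \mathbin{\nc{\diamond}} \chi_1$ for reduced $\chi_0,\chi_1$ of arity $n$---after which uniqueness of the Shannon cofactors, the induction hypothesis, and the determinism of the constructor $\nc{\diamond}$ close the argument with no case split at all. This subsumes the paper's Case~1 and Case~2 (including its argument that $\ell=\ell'$ because both letters arise from $\nc{\diamond}$ applied to the same operands) in the single observation that $\nc{\diamond}$ is a function of its operands; the paper only invokes that determinism locally, in its second case. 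One remark: your closing paragraph about the syntactic patterns of $\nc{\diamond}$ capturing \emph{all} semantic occurrences of useless and canalizing variables is not actually needed for this Proposition, which asserts canonicity only relative to the fixed operator $[\cdot]$; that capture property is precisely the content of the subsequent theorem (Canonicity of $\lddoUC$), as the paper's own discussion of $\phi_1$, $\phi_2$, $\phi_3$ illustrates. So flagging it as the main obstacle slightly misplaces the difficulty for this particular statement, even though you are right that it is where the real subtlety of the section lies.
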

\begin{proof}
    Let $f$ denote a Boolean function of arity $n$. 
    Any Boolean function has a unique, semantically equivalent, $\ldds$ graph $\phi$ which is also a $\lddoUC$ graph. 
    The existence is a consequence of the fact that the procedure $[\cdot]$ is an idempotent function so $[\phi]$ exists and $[[\phi]] = [\phi]$. 

    It remains to prove uniqueness, that is there is no other distinct graph $\psi$ which is also semantically equivalent to $f$ and such that $[\psi] = \psi$. 
    The proof is by induction on the arity $n$.  
    Uniqueness trivially holds for $n=0$, as $(\blacksquare,0)$ and $(\square,0)$ are the unique fixed point for $[\cdot]$ with arity zero. 
    Suppose that uniqueness holds for arity $n$.

    {\bf Case 1.} Suppose that $[\phi]$ has the form $(\phi_1,n)\ \diamond (\phi_2,n)$. 
    If $\psi$ has the form $\xrightarrow{\ell} (\psi',n)$ then by eliminating $\ell$ we get $(\psi'_1,n) \diamond (\psi'_2,n)$ which is semantically equivalent to $(\phi_1,n) \diamond (\phi_2,n)$ thus $\semb{\phi_1} \equiv \semb{\psi_1'}$ and $\semb{\phi_2} \equiv \semb{\psi_2'}$ and the induction hypothesis implies that both pairs should be syntactically equal but then this means that $(\phi_1,n)\ \diamond (\phi_2,n)$ is not reduced contradicting the hypothesis. 
    Now if $\psi$ has the form $(\psi'_1,n) \diamond (\psi'_2,n)$ then a similar reasoning leads to the fact that $\psi$ and $[\phi]$ must be syntactically equal.  

    {\bf Case 2.} Suppose that $[\phi]$ has the form $\xrightarrow{\ell} (\phi',n)$. 
    Like the first case, if $\psi$ has the form $(\psi'_1,n)\diamond(\psi'_2,n)$ then by eliminating $\ell$ we get a contradiction. 
    Now suppose $\psi$ has the form $\xrightarrow{\ell'} (\psi',n)$. By eliminating both letters and using the semantic equivalence together with the induction hypothesis we get that $\ell = \ell'$ as they were both introduced using $\nc{\diamond}$ applied to the exact same graphs. 
    Thus $\psi = [\phi]$. 
\end{proof}

The relative dependence to $[\cdot]$ highlights a subtle dependence between the way $[\cdot]$ is defined and the intro rules it uses. 
Let's see what this precisely means on a concrete example. 
Consider the graph $\phi := (\square \diamond \square) \diamond (\blacksquare \diamond \blacksquare)$. 
Then $[\phi]$ gives $\phi_1 := \xrightarrow{\cc_{1\zero}.\uu}\square$. 
Now, suppose we swap the first and fifth matchings of $\nc{\diamond}$ so that all canalizing letters get introduced before $\uu$. 
Then $\phi$ gets instead reduced to $\phi_2 := (\xrightarrow{\cc_{1\un}}\square) \diamond (\xrightarrow{\cc_{0\zero}}\blacksquare)$. 
If, however, we swap the third and fourth matchings, one gets $\phi_3 := \xrightarrow{\cc_{0\un}.\uu}\blacksquare$. 

Although $\phi_1$, $\phi_2$ and $\phi_3$ are semantically equivalent and unique w.r.t. their respective reduction operators, $\phi_1$ and $\phi_3$ are better than $\phi_2$. 
Indeed, not only they are more concise, they also capture all (semantically) useless and canalizing variables of the Boolean function they represent, albeit each uses a different letter to encode the canalizing variable. 

To ensure that the reduction operator captures, at the syntactic level, all occurrences of the elementary functors involved in the Boolean function, it must respect the implicit dependence between the letters in the considered intro rules.  
For the $\lddoUC$ model, the intro rules we used for canalizing letters have the letter $\uu$ involved in their very definitions 
since they expect either $(\square,n)$ or $(\blacksquare,n)$ to occur on one side of $\diamond$. 
Thus exhibiting $\uu$ letters before any other canalizing letter (like $[\cdot]$ does)  permits to canonically represent constant Boolean functions of any arity as $(\square,n)$ or $(\blacksquare,n)$, thereby syntactically capturing any potential occurrence of (semantic) canalizing variables via the intro rules we defined. 
We can now state the main result of this section: the canonicity of the $\lddoUC$ model. 
\begin{theorem}[Canonicity of $\lddoUC$]
    Every Boolean function $f$ has a unique, semantically equivalent, $\lddoUC$ graph $\phi$ satisfying the following property: the variable introduced by each $\diamond$ node of $\phi$ is neither a useless nor a canalizing variable. 
\end{theorem}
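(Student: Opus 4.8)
The plan is to \emph{upgrade} the relative canonicity of Proposition~\ref{prop:canonicity} from its operational formulation (the fixed-point condition $[\phi]=\phi$) to the intrinsic, semantic formulation stated here (no $\diamond$ node introduces a useless or canalizing variable). Accordingly, I would not redo the induction of Proposition~\ref{prop:canonicity} from scratch but instead bridge the two conditions: show that the unique reduced graph of $f$ is exactly the one whose surviving $\diamond$ nodes introduce no special variable. For existence, start from any $\ldds$ graph $\psi$ denoting $f$ and set $\phi := [\psi]$; by Lemma~\ref{lem:idem} this is a fixed point of $[\cdot]$, and denotes $f$ because the reduction preserves semantics, so only the semantic property of $\phi$ remains to be verified.

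The core is a forward bridge lemma: if $[\phi]=\phi$, then every $\diamond$ node of $\phi$ introduces a variable that is neither useless nor canalizing. First I would observe that the operands of any $\diamond$ node of a reduced graph are themselves reduced (unfolding $[\phi_0 \diamond \phi_1] = [\phi_0]\ \nc{\diamond}\ [\phi_1]$ and using that the top $\diamond$ was not collapsed forces $[\phi_0]=\phi_0$, $[\phi_1]=\phi_1$), and likewise below letters. Hence, by Proposition~\ref{prop:canonicity} applied to the operands, reduced subgraphs are \emph{canonical}: two of them are syntactically equal iff semantically equal, and one denotes a constant iff it is syntactically $(\square,n)$ or $(\blacksquare,n)$ (these being the fixed points obtained using only $\uu$-prefixes). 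Therefore the purely \emph{syntactic} guards tested by $\nc{\diamond}$ — operands identical, or one equal to $(\square,n)$ or $(\blacksquare,n)$ — fire exactly when the matching \emph{semantic} conditions hold, namely $\semb{\psi_0}=\semb{\psi_1}$ (useless) or $\semb{\psi_0}$, $\semb{\psi_1}$ constant (canalizing). A $\diamond$ node thus survives reduction precisely when its introduced variable is non-special.

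Existence then follows immediately from this forward direction. For uniqueness I would transfer it from Proposition~\ref{prop:canonicity}, while being explicit that the semantic $\diamond$-property must be read together with the canonical placement of letters enforced by the fixed order of $\nc{\diamond}$ (trying $\uu$ before the canalizing letters). This is essential for variables that are \emph{simultaneously} useless and canalizing, as happens for the single variable of a constant function of positive arity: the ordering forces them to be encoded by $\uu$, pinning down the canonical letter and ruling out the spurious alternative $\xrightarrow{\cc_{1\zero}}(\blacksquare,0)$ for the arity-one constant $\zero$ (whose reduction already gives $\xrightarrow{\uu}(\blacksquare,0)$). With this canonicalization built in, any graph meeting the property coincides with the unique fixed point denoting $f$.

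I expect the main obstacle to be exactly the syntactic-to-semantic coincidence underlying the bridge lemma: converting the structural pattern matching of $\nc{\diamond}$ into the assertions ``useless'' and ``canalizing'' requires the induction hypothesis in the strong form that reduced subgraphs are canonical representatives, and in particular that every constant function of arity $n$ has the unique reduced form $(\square,n)$ or $(\blacksquare,n)$. Establishing this canonical-constant fact, and verifying that the $\uu$-first ordering captures every variable that is both useless and canalizing, is what makes the chosen intro rules genuinely ``capture all special variables'' and is the step at which the relative dependence on $[\cdot]$ is finally discharged.
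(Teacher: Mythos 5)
Your proposal is correct and follows essentially the same route as the paper: the paper likewise obtains the theorem from Proposition~\ref{prop:canonicity} together with the observation that introducing $\uu$ before the canalizing letters makes $(\square,n)$ and $(\blacksquare,n)$ the canonical reduced forms of constant functions, so that the syntactic guards of $\nc{\diamond}$ coincide, on reduced operands, with the semantic useless/canalizing conditions. The paper leaves this bridge informal (deferring the details to its Coq formalization), and your bridge lemma --- including the caveat that uniqueness must be read together with the $\uu$-first placement of letters for variables that are simultaneously useless and canalizing --- is precisely the argument it sketches.
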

In other words, the theorem says that all useless and canalizing variables are captured in the graph $\phi$ by one of the letters. 
We next explain how negation is supported in our framework.

\section{Negation Operator}\label{sec:negation}
In this section we enrich our data structure with arity-preserving operators, such as negation. 
This extension is novel: none of the recent \bdd{} variants~\cite{CBDD,ESRBDD-TACAS2019} that partially use canalizing variables support negation. 
As we shall see in the sequel, our functional framework gives a clear insight into why this is difficult for those variants. 

Arity-preserving operators are not elementary, since they do not increase the arity of their operands.
We exemplify the main steps through the concrete example of the standard output negation operator on Boolean functions $\delta_\lnot: \F{n} \to \F{n}$, defined by
\(
    (\delta_\lnot f): (x_0,\dotsc,x_{n-1}) \mapsto \lnot f(x_0,\dotsc,x_{n-1})
\). 
Syntactically, the words labeling the edges will now include a new letter `$\bullet$' that encodes the negation operator. 

The letter `$\bullet$' is introduced by normalizing the representation of the square node $(\square,0)$:~\footnote{Choosing $(\square,0)$ over $(\blacksquare,0)$ is arbitrary and has no impact on the treatment that follows.}
\[
    \inference[\textnormal{\texttt{intro-$\bullet$}}]{(\square,0)}{\xrightarrow{\bullet} (\blacksquare,0)} \enspace .
\]
Arity-preserving operators in general, and the negation operator in particular, interact with both the elementary operators and the Shannon operator.
Those interactions have to be taken into account for the words labeling the edges of the graph to be canonical.
There are two fundamental aspects of such interactions.
The first is the distributivity with the Shannon operator.
The second is the commutativity with elementary operators: when and how do arity-preserving operators
commute with other operators?

The negation operator distributes over the Shannon operator, which makes it possible to propagate it upward. 
The related normalization rule given below is similar to the \bdd{} variants supporting complement edges.\footnote{One may arbitrarily choose the dual reduction rule that normalizes the negation to the left edge.}
\[
    \inference[\textnormal{\texttt{norm-Shannon}}]{(\xrightarrow{\bullet} (\phi_1,n)) \diamond (\phi_2,n)}{\xrightarrow{\bullet} ((\phi_1,n) \diamond (\xrightarrow{\bullet} (\phi_2,n)))}
    \enspace .
\]

Commuting `$\bullet$' with the letters encoding the elementary operators allows to syntactically exploit the involution property of negation: 
\[
        \inference[\textnormal{\texttt{norm-involution}}]{\xrightarrow{\bullet} (\xrightarrow{\bullet} (\phi,n))}{(\phi,n)} \enspace .
\]
However, commutation with elementary operators is not always possible.
For instance the following two functions are not semantically equivalent (the standard functional notations on the   right are given explicitly for convenience). 
\begin{align*}
&(\delta_{\cc_{1\zero}}\circ \delta_\neg) \zero = \un \star \zero          &\qquad (x &\mapsto \neg x) \\
&(\delta_\neg\circ \delta_{\cc_{1\zero}}) \zero = \neg (\zero \star \zero)        &\qquad (x &\mapsto 1)
\end{align*}

To overcome this issue, we adopt in this work a weaker notion of commutativity.
\begin{definition}[Quasi-commutativity]
    We say that an arity-preserving letter, $a$, \emph{quasi-commutes} with an elementary letter, $\ell$, if 
there exists an elementary letter $\ell_a$ (possibly different from $\ell$) such that
the graphs $\ell.a (\phi,n)$ and $a.\ell_a (\phi,n)$ are semantically equivalent for all $(\phi,n)$. 
(As the notation suggests, the letter $\ell_a$ depends in general on $a$.)
We will say that an alphabet $\Delta$ \emph{quasi-commutes} with $a$, or is $a-$stable, if for all $\ell \in \Delta$, $\ell_a \in \Delta$.
\end{definition}
Semantically, quasi-commutation of letters encodes the quasi-commutation of elementary operators and arity\-/preserving operators.
For each arity\-/preserving operator $a$, the following normalization rule, defined for each elementary operator $\ell$, exploits quasi-commutativity to expose arity\-/preserving operators before elementary operators. 
\[
\inference[\textnormal{\texttt{norm-commutation-}$(\ell,a)$}]{\xrightarrow{\ell.a} (\phi,n)}{\xrightarrow{a.\ell_a} (\phi,n)} \enspace .
\]
We stress the fact that $\ell$ must be an elementary letter. Hence, one cannot apply \texttt{norm-commutation} on $\xrightarrow{\bullet.\bullet} (\phi,n)$. 
Quasi-commutativity naturally extends to words: a word $w$ quasi-commutes with an arity-preserving letter $a$ if and only if all its letters quasi-commute with $a$. 

The negation, being involutive, is a special operator. 
Indeed for any letter $\ell$, ${(\ell_\bullet)}_\bullet = \ell$. 
Thus, one can make any alphabet $\Delta$ $\bullet$-stable by saturating it with the elements $\ell_\bullet$ for each $\ell \in \Delta$.
As defined, the involution and commutation rules of the negation expose the $\bullet$ letter at the beginning of each word.
The negation commutes with useless variables (i.e., $\uu.\bullet$' as `$\bullet.\uu$). 
However, it only quasi-commutes with all the other canalizing variables:  
\begin{align*}
    \label{eq:quasicommutation}
\cc_{0\zero}.\bullet &= \bullet.\cc_{0\un}    & \cc_{0\un}.\bullet &= \bullet.\cc_{0\zero} & ({\cc_{0\zero}}_\bullet = \cc_{0\un}) \\
\cc_{1\zero}.\bullet &= \bullet.\cc_{1\un}    & \cc_{1\un}.\bullet &= \bullet.\cc_{1\zero} & ({\cc_{1\zero}}_\bullet = \cc_{1\un})
\end{align*}

Quasi-commutation explains the difficulty in adding (and normalizing) negation in some \bdd{} variants like \zdd{}.
Typically, a model that has the letter $\cc_{0\un}$ without the letter $({\cc_{0\un}})_\bullet = \cc_{0\zero}$ cannot properly support the encoding and propagation of the negation over the data structure: the normalization becomes overly cumbersome while not offering any clear advantage.

We term the model $\lddoUC$ enriched with `$\bullet$' $\lddoNUC$. 
The reduction operator $[\cdot]$ defined in the previous section is extended to account for the introduction and normalizing rules of $\bullet$ while respecting the implicit dependencies of the introduction rules. 
The detailed procedure is given in Appendix~\ref{ax:norm-negation}. 
We prove also that $\lddoNUC$ is canonical as part of the formalization of $\lddo$ data structure in the Coq proof assistant.   
\begin{theorem}[Canonicity of $\lddoNUC$]
    Every Boolean function $f$ has a unique, semantically equivalent, $\lddoNUC$ graph $\phi$ satisfying the following property: the variable introduced by each $\diamond$ node of $\phi$ is neither a useless nor a canalizing variable.  
    Moreover the canonical $\lddoNUC$ graph of $\lnot f$ differs from $\phi$ only by the letter $\bullet$ at the upper-most edge of the graph.  
\end{theorem}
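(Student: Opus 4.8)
The plan is to mirror the development for $\lddoUC$ (Lemma~\ref{lem:idem} and Proposition~\ref{prop:canonicity}), extend the reduction operator $[\cdot]$ with the four $\bullet$-rules (\texttt{intro-}$\bullet$, \texttt{norm-Shannon}, \texttt{norm-involution}, \texttt{norm-commutation}) as spelled out in Appendix~\ref{ax:norm-negation}, and then isolate the single genuinely new phenomenon: the \emph{sign} of the top edge. First I would re-establish that the extended $[\cdot]$ is semantics-preserving and idempotent (the analogue of Lemma~\ref{lem:idem}); the only new verification is that each $\bullet$-rule is confluent with the elementary introductions performed inside $\nc\diamond$, which is exactly what $\bullet$-stability of $\Delta$ (e.g.\ ${(\cc_{0\zero})}_\bullet=\cc_{0\un}$) together with involution ${(\ell_\bullet)}_\bullet=\ell$ guarantees. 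Existence of a reduced $\lddoNUC$ graph for $f$ then follows by reducing its $\ldds$ graph, as in Proposition~\ref{prop:canonicity}.

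The crux is uniqueness, and the new difficulty relative to $\lddoUC$ is that two reduced graphs may differ only by a leading $\bullet$ at the root. I would first record the structural invariant enforced by \texttt{norm-Shannon} and \texttt{norm-commutation}: in any reduced graph the letter $\bullet$, when present on an edge, is its first letter, and no \emph{left} operand of a $\diamond$ ever carries a leading $\bullet$ (it is always pulled up). Hence a reduced graph is either leading-$\bullet$-free or of the shape $\xrightarrow{\bullet}\phi_0$ with $\phi_0$ leading-$\bullet$-free. To decide which occurs, I would introduce a \emph{canonical sign} $\sigma:\F{n}\to\B$ defined by recursion on arity: $\sigma(\zero)=0$, $\sigma(\un)=1$, and for $n>0$, $\sigma(f)=\sigma(f|_{x_0=1})$ when the left cofactor $f|_{x_0=0}$ is constant and $\sigma(f)=\sigma(f|_{x_0=0})$ otherwise. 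A short induction yields $\sigma(\lnot f)=\lnot\sigma(f)$. The key lemma (\emph{sign determinacy}) is that a reduced graph $\phi$ carries a leading $\bullet$ if and only if $\sigma(\semb{\phi})=1$; this is proved by structural induction on $\phi$, matching each reduced top-shape — terminal $(\blacksquare,0)$; a leading elementary letter $\uu$, $\cc_{0\zero}$, $\cc_{0\un}$, $\cc_{1\zero}$, $\cc_{1\un}$; a bare $\diamond$; or a leading $\bullet$ — against the corresponding clause of the recursion for $\sigma$, crucially using that a reduced bare $\diamond$ has a non-constant left operand.

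With sign determinacy in hand, uniqueness follows by induction on the arity $n$ along the two cases of Proposition~\ref{prop:canonicity}. Given reduced graphs $\phi,\psi$ with $\semb{\phi}=\semb{\psi}$: if their signs differ, sign determinacy is contradicted; if both are leading-$\bullet$ (resp.\ both leading-$\bullet$-free) we peel the $\bullet$ (resp.\ proceed verbatim as in Proposition~\ref{prop:canonicity}) and conclude by the induction hypothesis, observing that the subgraphs — now possibly carrying a $\bullet$ on a \emph{right} operand — have strictly smaller arity and are therefore covered. The negation clause is then immediate: prepending $\bullet$ to the canonical graph of $f$ and applying \texttt{norm-involution} at the root gives a reduced graph of $\lnot f$ (the root $\bullet$ sits \emph{above} the top $\diamond$, so \texttt{norm-Shannon} is not triggered), and uniqueness identifies it with the canonical graph of $\lnot f$, which by sign determinacy differs from that of $f$ exactly by the leading $\bullet$.

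The step I expect to be the main obstacle is sign determinacy, i.e.\ proving that the top-level negation status is a well-defined function of $f$. This is where the interplay flagged in the paper surfaces: whether a canalizing situation is recorded as, say, $\cc_{0\un}$ or as $\bullet.\cc_{0\zero}$ depends on the signs of the operands, and one must check that \texttt{norm-Shannon} (which fires precisely when a constant operand $\un=\xrightarrow{\bullet}\blacksquare$ occurs on the left) and the $\bullet$-stable commutations conspire so that the outcome depends only on the semantics. Once this invariant is pinned down — it is exactly the content carried through the Coq formalization — the remaining arguments are routine adaptations of the $\lddoUC$ proof.
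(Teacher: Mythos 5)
Your overall architecture is the right one: the paper itself only specifies the extended reduction operator (Appendix~\ref{ax:norm-negation}) and delegates this theorem's proof to the Coq formalization, and a paper-level proof would indeed redo Lemma~\ref{lem:idem} and the arity induction of Proposition~\ref{prop:canonicity}, with the leading-$\bullet$ status of a reduced graph as the one genuinely new invariant --- you correctly isolate that invariant (``sign determinacy'') as the crux. However, the concrete sign function $\sigma$ you define is wrong for the reduction operator you committed to, so your key lemma as stated is false. Counterexample: take $f = x_0$, i.e.\ $\zero\star\un$. The appendix's $\nc{\diamond}$ tries the right-constant rules ($\uu$, $\cc_{1\un}$, $\cc_{1\zero}$) \emph{before} the left-constant rules ($\cc_{0\un}$, $\cc_{0\zero}$), so reducing $(\blacksquare,0)\diamond(\square,0)$ evaluates $\blacksquare\ \nc{\diamond}\ (\xrightarrow{\bullet}\blacksquare)$ and the $\cc_{1\un}$ branch fires, yielding the bullet-free canonical graph $\xrightarrow{\cc_{1\un}}(\blacksquare,0)$; yet your $\sigma$ tests the \emph{left} cofactor for constancy first and gives $\sigma(x_0)=\sigma(\un)=1$, predicting a leading $\bullet$. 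Dually, $\lnot x_0$ canonicalizes to $\xrightarrow{\bullet.\cc_{1\un}}(\blacksquare,0)$ although $\sigma(\lnot x_0)=0$. The mismatch arises exactly when both cofactors are constant and distinct ($x_0$ and $\lnot x_0$, padded to any arity via $\uu$), and it breaks your structural induction in the letter-headed case $\xrightarrow{\cc_{1\un}}$ with constant body --- not in the bare-diamond case you flagged as the delicate one. The repair is local: define $\sigma(f)=\sigma(f|_{x_0=0})$ \emph{unless} $f|_{x_0=0}$ is constant while $f|_{x_0=1}$ is not, in which case $\sigma(f)=\sigma(f|_{x_0=1})$. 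This mirrors the match priority of $\nc{\diamond}$, still satisfies $\sigma(\lnot f)=\lnot\sigma(f)$ (the side condition is negation-invariant, and in the both-constant branch antisymmetry is inherited from the left cofactor), and your case analysis then goes through.

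A second, minor slip: peeling the leading $\bullet$ does not decrease arity, since $\bullet$ is arity-preserving, so your remark that the subgraphs obtained after peeling ``have strictly smaller arity'' is inaccurate as written. The induction should be organized as in Proposition~\ref{prop:canonicity}: at arity $n+1$, use (corrected) sign determinacy to conclude that two semantically equal reduced graphs have the same leading-$\bullet$ status, strip the $\bullet$ if present (same arity, now bullet-free heads), and only then eliminate the head letter or descend through $\diamond$ to operands of arity $n$ --- which may carry a $\bullet$ on the right, so the induction hypothesis must be stated for \emph{all} reduced graphs of arity $n$, bulleted or not. With these two fixes, your proof is a faithful reconstruction of what the paper verifies in Coq, and the negation clause follows exactly as you argue.
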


Figure~\ref{fig:nuc} shows the canonical $\lddoNUC$ graph of the Boolean function of Example~\ref{ex:run}. 




\section{Functional Classification\label{sec:dd-unify}}
The functional point of view suggests a natural way to classify and enumerate a vast class of ordered models.
It turns out that a large body of already existing \bdd{} variants can be seen, through the lenses of $\ldd$, as special ordered models.

In this section, we start by exhaustively enumerating all arity-preserving functors that act on $f \in \F{n}$ by transforming its output $f(x_0,\dotsc,x_{n-1})$.~\footnote{Other arity-preserving functors \cite{RolfVariantDual,BurchLong1992} are possible and equally interesting. Those are however out of the scope for this paper, but planned in the near future.}


An arity-preserving operator that acts on $f \in \F{n}$ by transforming its output $f(x_0,\dotsc,x_{n-1})$ necessarily has the form:
\[
    (\delta_p f): (x_0,\dotsc,x_{n-1}) \mapsto p(f(x_0,\dotsc,x_{n-1}))
\]
where $p$ is a Boolean function in $\F{1}$.
There are $2^{2^1} (= 4)$ possibilities for $p$: the two constant functions $\zero$ and $\un$ of arity $1$,
the identity function $\imath:x \mapsto x$, and the negation $\lnot:x \mapsto \lnot x$.
When $p$ is a constant function, $\delta_p$ is not injective and is therefore of
little interest when it comes to canonical representations.
When $p$ is the identity $\imath$, then $\delta_p$ is the identity operator $\varepsilon$.
Thus, one recovers the model $\ls$.
Finally, when $p$ is the negation, $\delta_p$ corresponds to $\delta_\lnot$ and the so obtained model, termed $\lsn$, enriches $\ls$ with the negation operator.

In a similar fashion, we enumerate elementary operators that act on $f \in \F{n}$ by combining its output $f(x_1,\dotsc,x_{n})$ with a fresh variable $x_0$. Such operator necessarily has the form
\[
    (\delta_p f): (x_0,\dotsc,x_{n}) \mapsto p(x_0, f(x_1,\dotsc,x_{n})) \enspace ,
\]
where the parameter $p$ is now an element of $\F{2}$.
Let $\pi_1 \in \F{(n+1)}$ denote the projection operator returning the first input.
We can enumerate the $2^{2^2} (= 16)$ possibilities for $p$ by combining two Boolean functions of arity $1$ using the Shannon operator.
Constant operators (2 cases), as well as the operators involving the projection $\pi_1$ (2 cases), are not injective.
Therefore, they cannot be used for canonical representations.
Except for two injective operators, the rest can be fully captured by one of the elementary operators we have already introduced (related to useless or canalizing variables---see Definitions~\ref{def:uselessvar} and~\ref{def:canalizingvar}), possibly combined with the negation operator.
The two remaining injective elementary operators reveal a new kind of variable which is neither useless nor canalizing. 
\begin{definition}[Xor Variable]
\label{def:xorvar}
Let $\delta_\xx$ denote the following elementary operator:
\[
    \delta_\xx: \F{n} \to \F{(n+1)}, \quad f \mapsto f \star \delta_\neg (f) \enspace .
\]
A variable introduced with $\delta_\xx$ is called a \emph{xor variable}.
\end{definition}
Although one can define a model solely with $\delta_{\xx}$, without supporting the negation as an extra operator,
such a model would not necessarily be useful, as the reduction of the xor-variables cannot be performed in constant time over normalized subgraphs.
Thus, such operator is much more relevant when the negation is properly supported (i.e., propagated and normalized as discussed in Section~\ref{sec:negation}).
This enumeration suggests that a model defined using
\[
\{\uu,\xx,\cc_{0\zero},\cc_{1\zero},\cc_{0\un},\cc_{1\un}\} \cup \{\bullet\} \enspace ,
\]
where the letter `$\xx$' encodes $\delta_\xx$, would be the most expressive, $\bullet$-stable, model of the considered class as it has all the elementary operators plus the negation.
The next section is entirely devoted to this model, termed $\lddoNUCX$.
\begin{table}[ht]
    \centering
\scriptsize
\caption{\label{tab:variant-model-o} Existing \bdd{} variants with their corresponding $\ldd$ ordered models. }
\renewcommand{\arraystretch}{1.3}
\begin{tabular}{|r|l|r@{\hskip 2pt}c@{\hskip 2pt}l|}
\hline
\bfseries Variant & \bfseries Model &  \multicolumn{3}{r|}{\bfseries Alphabet} \\ 
\hline
SDD    &   $\ls$	& $\emptyset$ & $\cup$ & $\emptyset$ \\ 
SDD+N  &   $\lsn$	& $\emptyset $ & $\cup$ & $ \{\bullet\}$ \\ 
\hline 
ROBDD~\cite{Bryant1986}  &  $\lddoU{}$	& $\{\uu\} $ & $\cup$ & $ \emptyset$ \\ 
ROBDD+N~\cite{Somenzi1999}  &  $\lddoNU{}$	& $\{ \uu \} $ & $\cup$ & $ \{\bullet\}$ \\
ZDD~\cite{ZDD}   & $\lddoCuz$	& $\{\cc_{1\zero}\} $ & $\cup$ & $ \emptyset$ \\
ChainDD~\cite{CBDD} & $\lddoUCuz{}$ & $\{\uu, \cc_{1\zero}\}$ & $\cup$ &  $\emptyset$ \\
ChainDD+N~\cite{CBDD+N} & $\lddoNUCuzuu{}$ & $\{\uu, \cc_{1\zero}, \cc_{1\un}\}$ & $\cup$ &  $\{\bullet\}$ \\
ESR$\text{-L}_0$~\cite{ESRBDD-TACAS2019} & $\lddoUCuz{}$ & $\{\uu, \cc_{1\zero}\}$ & $\cup$ &  $\emptyset$ \\
ESR~\cite{ESRBDD-TACAS2019}  &   $\lddoUCz{}$	& $\{\uu, \cc_{0\zero}, \cc_{1\zero}\} $ & $\cup$ & $ \emptyset$ \\ 
\hline
\texttt{DAGaml-O-NUCX}       & $\lddoNUCX{}$    & $\{\uu,\xx,\cc_{0\zero},\cc_{1\zero},\cc_{0\un},\cc_{1\un}\} $ & $\cup$ & $\{\bullet\}$ \\
\hline
\end{tabular}
\end{table}

Table~\ref{tab:variant-model-o} summarizes some $\ldd\texttt{-O}$ models and their related variants (or implementations).
Observe that the \texttt{TBDD}~\cite{TaggedBDD} variant is not an ordered model as it uses a syntactic negation that does not correspond to the (functional) standard negation.\footnote{However, this variant can be captured as a model enriched with a specific operator that captures the semantics of its syntactic negation.}
The last column of the table gives the alphabet of each model split into two subsets: elementary letters (left) and arity-preserving letters (right) where the expressiveness increases top down. 

The ordered models introduced so far form in fact a complete lattice.
Its (partial) Hasse diagram is depicted in Figure~\ref{fig:lattice-ordered-models}.
The least upper bound (resp. greatest lower bound) of two models is the model induced by the union (resp. intersection) of their $\Delta$ alphabets.
The first layer of the diagram has exactly $13$ elements: one per elementary operator ($\{\uu, \xx, \cc_{0\zero}, \cc_{0\un}, \cc_{1\zero}, \cc_{1\un}\}$), one per negated elementary operator (not necessarily $\bullet$-stable), and one involving the negation only.
A model $M_2$ is more expressive than another model $M_1$ if and only if there is a path from $M_1$ to $M_2$ in the diagram.
In particular, if there is no path between $M_1$ and $M_2$, then they are incomparable.
This relation translates immediately to the number of nodes: the more expressive the model is, the smaller its number of nodes.
Observe for instance, the graphs of Figure~\ref{fig:RE_oldd}, where the number of nodes is decreasing from left to right.

\begin{figure}
	\centering
	\begin{tikzpicture}[transform shape, scale=0.75]
	\node (nucx) {$\ldd$\texttt{-O-NUCX}};
	\node (robddn) [below left of=nucx] {$\lddoNU{}$};
	\node (esr) [below right of=nucx] {$\lddoUCz$};
	\node (chain) [below of=esr] {$\lddoUCuz$};
	\node (zdd) [below of=chain] {$\lddoCuz$};
	\node (robdd) [left of=zdd] {$\ldd$\texttt{-O-U}};
	\node (sdd) [below left of=zdd] {$\ls{}$};
	\node (sddn) [left of=robdd] {$\lsn{}$};
	\draw[->,>=stealth] (robdd) to node [swap] {} (robddn);
	\draw[->,>=stealth] (zdd) to node [swap] {} (chain);
	\draw[->,>=stealth] (chain) to node [swap] {} (esr);
	\draw[->,>=stealth] (robddn) to node [swap] {} (nucx);
	\draw[->,>=stealth] (esr) to node [swap] {} (nucx);
	\draw[->,>=stealth] (sdd) to node [swap] {} (robdd);
	\draw[->,>=stealth] (sdd) to node [swap] {} (zdd);
	\draw[->,>=stealth] (robdd) to node [swap] {} (chain);
	\draw[->,>=stealth] (sdd) to node [swap] {} (sddn);
	\draw[->,>=stealth] (sddn) to node [swap] {} (robddn);
	\end{tikzpicture}
	\caption{Some $\ldd$ ordered models organized in a lattice (cf. Table~\ref{tab:variant-model-o} for their corresponding variants).}
	\label{fig:lattice-ordered-models}
\end{figure}

It becomes apparent that \chainB{}, \chainZ{}~\cite{CBDD}, and \esrlz~\cite{ESRBDD-TACAS2019} are three possible 
implementations of the $\lddoUCuz{}$ model.
The main difference between these variants resides in their respective, carefully designed, choices of encoding
the involved elementary operators either as labels or special nodes.
\esr{}, for instance, encodes canalizing variables as nodes with one child.
From a functional point of view, however, they are indistinguishable. 
To the best of our knowledge, such observation has never been made in the literature where only a performance comparison prevailed. 
Notice also that these three variants, as well as their related model, are not $\bullet$-stable, hindering the support of constant-time negation. 
This gives a clear insight into a fundamental limitation of these models. 
\begin{remark}
    In \cite[Section 9]{CBDD+N}, the author mentions an extension of \chainB{}~\cite{CBDD} with complement edges. 
    The section is fairly short and doesn't explain how this was done while preserving canonicity. 
    We suspect that the author used both $\cc_{1\zero}$ and $\cc_{1\un}$ to make the variant stable by negation. 
    Recall that these two letters quasi-commute with negation. 
    Let's also stress the fact that the two letter used in \esr{}~\cite{ESRBDD-TACAS2019} do not quasi-commute making it difficult to support complement edges with their natural functional semantics. 
\end{remark}

\section{$\lddoNUCX$}\label{sec:onucx}

We discuss in this section the most expressive canonical model, called $\lddoNUCX$, that supports all elementary operators of the class $\lddo$, together with negation:
\(
\Delta:= \{\uu,\xx,\cc_{0\zero},\cc_{1\zero},\cc_{0\un},\cc_{1\un}\} \cup \{\bullet\}
\). 
We use the following introduction rule for the letter $\xx$ (encoding xor-variables).
\[
    \inference[\textnormal{\texttt{intro-}$\xx$}]{(\phi,n) \diamond \bigl( \xrightarrow{\bullet}(\phi,n)\bigr)}{\xrightarrow{\xx} (\phi,n)}  \enspace .
\]
The letter `$\xx$' commutes with `$\bullet$', that is the word `$\xx.\bullet$' can be rewritten as `$\bullet.\xx$'.
Thus, $\Delta$ is $\bullet$-stable. 

The extension of the reduction operator $[\cdot]$ is much more involved and is part of the formalization of the $\ldd$ data structure where we also proved the canonicity of the model. 

\begin{theorem}[Canonicity of $\lddoNUCX$]
\label{thm:canon}
    Every Boolean function $f$ has a unique, semantically equivalent, $\lddoNUCX$ graph $\phi$ satisfying the following property: the variable introduced by each $\diamond$ node of $\phi$ is neither a useless nor a canalizing variable nor a xor-variable.
    Moreover the canonical $\lddoNUCX$ graph of $\lnot f$ differs from $\phi$ only by the letter $\bullet$ at the upper-most edge of the graph.  
\end{theorem}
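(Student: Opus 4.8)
The plan is to reproduce, for the full alphabet $\Delta=\{\uu,\xx,\cc_{0\zero},\cc_{1\zero},\cc_{0\un},\cc_{1\un},\bullet\}$, the two-step argument of Proposition~\ref{prop:canonicity} and of the $\lddoNUC$ canonicity proof: first exhibit an idempotent reduction operator $[\cdot]$, which yields \emph{existence}, and then prove \emph{uniqueness} by induction on the arity $n$. The single new elementary letter is $\xx$, introduced by \texttt{intro-}$\xx$, so the whole difficulty is to extend $[\cdot]$ so that it recognizes xor-variables while staying coherent with the $\bullet$-normalization already used for $\lddoNUC$, and to add a negation lemma for the ``moreover'' clause.

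First I would specify the extended constructor $\nc{\diamond}$ and operator $[\cdot]$ on all of $\Delta$. As for $\lddoNUC$, every word is put in $\bullet$-normal form by \texttt{norm-Shannon}, \texttt{norm-involution} and \texttt{norm-commutation}, which exposes any $\bullet$ at the top of the word; the elementary letters are then introduced in a fixed order. The test of \texttt{intro-}$\xx$ on a node $(\phi_1,n)\diamond(\phi_2,n)$ is the syntactic identity $(\phi_2,n)=\xrightarrow{\bullet}(\phi_1,n)$ between two \emph{already reduced} operands, so it is only meaningful once the $\bullet$ on the right operand has been normalized; this fixes the relative order ``normalize $\bullet$, then try $\xx$.'' Idempotence of $[\cdot]$ is then, exactly as in Lemma~\ref{lem:idem}, a structural induction using that every letter is always introduced in the same order, and existence follows by applying $[\cdot]$ to the trivial $\ldds$ graph of any $f$ (which is a $\lddoNUCX$ graph).

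Uniqueness I would prove by induction on $n$, following the case analysis of Proposition~\ref{prop:canonicity}, after handling the leading $\bullet$. For this I first establish a standalone \emph{negation lemma}: for every graph $G$, $[\xrightarrow{\bullet} G]$ equals $[G]$ with its top $\bullet$ toggled. This is a structural induction on $G$ in which negation is pushed through a diamond by \texttt{norm-Shannon}, through an elementary letter by \texttt{norm-commutation} together with $\bullet$-stability of $\Delta$, and through $\xx$ unchanged since $\xx_\bullet=\xx$, until it reaches the root or cancels by \texttt{norm-involution}; the crucial structural fact is that a reduced diamond never carries a leading $\bullet$ on its \emph{left} edge. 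This lemma yields the ``moreover'' clause directly. For uniqueness, given two reduced graphs of a fixed $f$, the case where exactly one of them carries a top $\bullet$ is excluded by combining the lemma with the induction hypothesis: peeling the $\bullet$ would present reduced graphs of $f$ and of $\lnot f$, and decomposing the top diamond (or letter) would force one child to equal a toggled reduced subgraph, hence to carry a leading $\bullet$ on a left edge, which \texttt{norm-Shannon} forbids. Both graphs therefore agree on the top $\bullet$; stripping it, their $\bullet$-free tops are compared exactly as in Cases~1 and~2 of Proposition~\ref{prop:canonicity}, the only new subcase being $\ell=\xx$, whose elimination returns $(\phi',n)\diamond(\xrightarrow{\bullet}(\phi',n))$ and is treated like the other letters via the induction hypothesis on the arity-$n$ operands.

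I expect the main obstacle to be the coordination of the introduction order with the negation, rather than any single case. Detecting a xor-variable presupposes the negation already normalized, so $\bullet$ must be processed before $\xx$; conversely, the sign produced at the top must be consistent across all representations of $f$, which is delicate precisely because several letters can encode the same special variable. For instance $x_0=\blacksquare\diamond(\xrightarrow{\bullet}\blacksquare)$ reduces to $\xrightarrow{\xx}\blacksquare$, and only if $\xx$ is tried \emph{before} $\cc_{0\zero}$ (and its siblings) does $\lnot x_0$ reduce to $\xrightarrow{\bullet.\xx}\blacksquare$; a naive order would instead return $\xrightarrow{\cc_{0\un}}\blacksquare$, differing from $x_0$ by more than the top $\bullet$ and thereby breaking both canonicity and the negation lemma. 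Verifying that one fixed order simultaneously keeps $[\cdot]$ idempotent, canonical, and negation-invariant is the heart of the argument, and is exactly the bookkeeping discharged in the Coq formalization.
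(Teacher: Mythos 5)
You take essentially the same route as the paper: for Theorem~\ref{thm:canon} the paper gives no textual proof (it defers to its Coq formalization), but the blueprint it provides---extend the $\lddoNUC$ reduction of Appendix~\ref{ax:norm-negation} with \texttt{intro-}$\xx$ tested only after the leading $\bullet$ has been normalized, reprove idempotence as in Lemma~\ref{lem:idem}, prove uniqueness by induction on the arity as in Proposition~\ref{prop:canonicity}, and obtain the ``moreover'' clause from the $\bullet$-stability of $\Delta$ via a negation lemma---is exactly what you reconstruct. Only your closing example is off: under any single fixed order that, like $\ncn{\diamond}$, strips the leading $\bullet$ (\texttt{norm-Shannon} plus quasi-commutation) \emph{before} introducing letters, trying the canalizing letters before $\xx$ still yields a negation-consistent pair ($x_0$ reduces to $\xrightarrow{\cc_{1\un}}\blacksquare$ and $\lnot x_0$ to $\xrightarrow{\bullet.\cc_{1\un}}\blacksquare$, a different but equally canonical form), so the genuine hazard is introducing a letter on a node whose left edge still carries $\bullet$---precisely what $\ncn{\diamond}$ rules out---rather than the relative order of $\xx$ versus $\cc_{0\zero}$ and its siblings; this slip is confined to motivation and does not affect your proof plan.
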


To contrast this new model with the most recent \bdd{} variants, namely ChainDD~\cite{CBDD} and ESR$\text{-L}_0$~\cite{ESRBDD-TACAS2019}, the $\lddoUCuz{}$ graph of the Boolean function of Example~\ref{ex:run} is depicted in Figure~\ref{fig:uc} whereas its $\lddoNUCX$ graph is given in Figure~\ref{fig:nucx}. The latter is clearly more concise as it has only one diamond node.
Negating such graph in Chain-DD or ESR would require reconstructing the entire diagram as both don't support negation. 
Negating it in $\lddoNUCX$ amounts to simply add a `$\bullet$' in the label of its uppermost edge. 

%

\subsection{Normalizing Logical Connectives}
\label{ax:normalizing}
In practice, normalized $\lddoNUCX$ graphs are built by normalizing the logical connectives.
We detail below the procedure `\texttt{andb}' that computes the conjunction of two normalized graphs.
The algorithm is easily adaptable to the other operations.
Computations are performed as usual, by recursively pushing the logical operator down to the child graphs of diamond nodes.
For clarity, we denote a $\ldd$ graph $(\phi,n)$ simply by $\phi$, omitting arity $n$ whenever unnecessary.
We consider that the arity can be always extracted from $\phi$ by calling $\operatorname{arity}(\phi)$.
\begin{lstlisting}
let rec andb $\phi$ $\psi$ =
  if $\phi = \psi$ then $\phi$
  else if $\phi$ = $\xrightarrow{\bullet}\psi$ then $(\blacksquare, \operatorname{arity}(\phi))$
  else match $\phi$ with
    | $(\blacksquare, n)$ $\camlarrow$ $(\blacksquare, n)$
    | $\xrightarrow{\bullet}(\blacksquare, n)$ $\camlarrow$ $\psi$
    | _ $\camlarrow$ (match $\psi$ with
      | $(\blacksquare, n)$ $\camlarrow$ $(\blacksquare, n)$
      | $\xrightarrow{\bullet}(\blacksquare, n)$ $\camlarrow$ $\phi$
      | _ $\camlarrow$ (
        let $\phi_0$ = cofactor $0$ $\phi$
        and $\phi_1$ = cofactor $1$ $\phi$
        and $\psi_0$ = cofactor $0$ $\psi$
        and $\psi_1$ = cofactor $1$ $\psi$ in
        (andb $\phi_0$ $\psi_0$) $\ncn{\diamond}$ (andb $\phi_1$ $\psi_1$) ))
\end{lstlisting}
%
%
`\texttt{cofactor} $v_0$ $\psi$' is defined below. Intuitively, it returns the graph of the Boolean function $\psi$ when its first argument is set to $v_0$.
%
\begin{lstlisting}
let cofactor $v_0$ $\phi$ =
  match $\phi$ with
  | $\xrightarrow{\bullet}\phi'$ $\camlarrow$ $\xrightarrow{\bullet}$(cofactorElem $v_0$ $\phi'$)
  | _ $\camlarrow$ (cofactorElem $v_0$ $\phi$)
\end{lstlisting}

\begin{lstlisting}
let cofactorElem $v_0$ $\phi$ =
  match $\phi$ with
  | $(\blacksquare, n)$ $\camlarrow$ $(\blacksquare, n-1)$
  | $\xrightarrow{l} \phi'$ $\camlarrow$ match $l$ with
    | $\uu$ $\camlarrow$ $\phi'$
    | $\cc_{b\tt}$ $\camlarrow$ (if $v_0 = b$ then $(\tt, n-1)$ else $\phi'$)
    | $\xx$ $\camlarrow$ (if $v_0$ then $\xrightarrow{\bullet}\phi'$ else $\phi'$)
\end{lstlisting}
%

We discuss below the time complexity of \lstinline|andb| $\phi$ $\psi$. 
Let $\abs{\phi}$ denote the number of diamond nodes of the normalized graph $\phi$, and
let $\abs{\phi}_\texttt{S}$ denote the number of diamond nodes of its equivalent $\ldds$ graph (obtained by eliminating all letters and using diamond nodes instead). 
The algorithm \lstinline|andb| can be applied almost identically to $\ldds$ graphs except for a
minor edit to account for the terminal node $(\square,0)$.
The algorithm performs a simple structural induction on its inputs.
Assuming memoization, the number of recursive calls is bounded by $\mathcal{O}(\abs{\phi}_\texttt{S}\times\abs{\psi}_\texttt{S})$.
The time complexity of a single recursive call is $\mathcal{O}(1)$ as the complexity of \lstinline|cofactor| is constant time (finite branching and no loops).
Thus, the overall time complexity is $\mathcal{O}(\abs{\phi}_\texttt{S}\times\abs{\psi}_\texttt{S})$.

We have $\abs{\phi}_\texttt{S}$ is equal to $\abs{\phi}$ plus the total size (or length) of all words in $\phi$ (each letter is a reduced diamond node).
The size of any word in $\phi$ is bounded by $n$, the total number of variables.
The total number of edges in $\phi$ is $1+2\abs{\phi}$.
Thus the total size of all words is bounded by $(1+2\abs{\phi})n$ and, therefore
\[
    \abs{\phi}_\texttt{S} \leq n + \abs{\phi} + 2 n \abs{\phi} = \mathcal{O}(n\times \abs{\phi}) \enspace .
\]
This leads to an overall time complexity bounded by $\mathcal{O}(n^2\times\abs{\phi}\times\abs{\psi})$.

\subsection{Complexity of Common Queries}\label{sec:complex}
Common queries on \robdd{}  (e.g., TAUTOLOGY, EQUIVALENCE, SAT, AnySAT, AllSAT, \#SAT) have polynomial time complexity on the size of the $\lddoNUCX$ graphs.

The simplest way to check for EQUIVALENCE($(\phi, n)$, $(\psi, n)$) is by hash-consing both $\phi$ and $\psi$. 
Hash-consing has a linear time complexity in the size of both graphs leading to $\mathcal{O}(n \times (\abs{\phi} + \abs{\psi}))$.

For SAT, it suffices to check whether the graph is $(\blacksquare,n)$.
In the worst case, the entire word (of size at most $n$) has to be checked, leading to a time complexity of $\mathcal{O}(n)$.
Likewise for TAUTOLOGY.

One can compute \#SAT inductively on the structure of a $\lddo$ graph:
\begin{lstlisting}
let rec count $(\phi, n)$ =
  match $(\phi, n)$ with
  | $(\blacksquare, n)$ $\camlarrow$ $0$
  | $(\square, n)$ $\camlarrow$ $2^n$
  | $\xrightarrow{\bullet}(\phi', n')$ $\camlarrow$ $2^n-\operatorname{count}(\phi', n)$
  | $\xrightarrow{\uu}(\phi', n')$ $\camlarrow$ $2\times\operatorname{count}(\phi', n)$
  | $\xrightarrow{\xx}(\phi', n')$ $\camlarrow$ $2^{n'}$
  | $\xrightarrow{\cc_{0\zero}}(\phi', n')$ | $\xrightarrow{\cc_{1\zero}}(\phi', n')$ $\camlarrow$ $\operatorname{count}(\phi', n)$
  | $\xrightarrow{\cc_{0\un}}(\phi', n')$ | $\xrightarrow{\cc_{1\un}}(\phi', n')$ $\camlarrow$ $2^{n'}+\operatorname{count}(\phi', n)$
  | $(\phi_0, n')\diamond(\phi_1, n')$ $\camlarrow$ $\operatorname{count}(\phi_0, n') + \operatorname{count}(\phi_1, n')$
\end{lstlisting}
Using memoization, it can thus be computed in $\mathcal{O}(n\times\abs{\phi})$.
Computing AnySat or AllSat can be performed similarly by induction over the structure.
In particular, AnySat can be computed in $\mathcal{O}(n)$, and AllSat in $\mathcal{O}(n\times \#SAT(\phi))$.

\section{Compression Factors}
\label{sec:size}
We compare the size of the different $\lddo$ models presented so far. 
Let $f$ denote a Boolean function. 
We denote by $N_{\texttt{A}}(f)$ the total number of nodes in the diagram that represents $f$ with respect to model $\texttt{A}$. 
For clarity, we restrict our attention to models with no complement edges (that is models without the $\delta_\neg$ functor). 
Knuth~\cite{knuth4A} showed that~\footnote{where $\ldds$ are referred to as quasi-BDD.}
\[
N_{\lddoU}(f) \leq N_{\ldds}(f) \leq \frac{n+1}{2} (N_{\lddoU}(f) + 1) 
\]
and similarly for $N_{\lddoCuz}(f)$ (which is equivalent to \zdd). 
In fact the same inequality holds for any (non-negated) model of the first layer of the Hasse diagram (see Figure~\ref{fig:lattice-ordered-models}).  
This in particular allows to compare the size of any two incomparable models. 
For instance, for \zdd{} and \bdd{} one gets 
\[
N_{\lddoCuz}(f) \leq \frac{n+1}{2} (N_{\lddoU}(f) + 1) \quad \text{and} \quad N_{\lddoU}(f) \leq \frac{n+1}{2} (N_{\lddoCuz}(f) + 1) \enspace .
\]
These inequalities bound the potential gain from using one model over the other and shows that such gain is linear in the number of variables (which may be considerable when $n$ is big). 
The following immediate generalization holds. 
\begin{theorem}\label{thm:size}
If the model $\texttt{B}$ is more expressive than model $\texttt{A}$ with respect to the Hasse diagram of 
Figure~\ref{fig:lattice-ordered-models}, then 
\[
N_{\texttt{B}}(f) \leq N_{\texttt{A}}(f) \leq \frac{n+1}{2} (N_{\texttt{B}}(f) + 1) \enspace .
\]
Consequently if models $\texttt{B}_1$ and $\texttt{B}_2$ are two incomparable models that are both more expressive than $\texttt{A}$, then 
\[
N_{\texttt{B}_1}(f) \leq \frac{n+1}{2} (N_{\texttt{B}_2}(f) + 1) \quad \text{and} \quad N_{\texttt{B}_2}(f) \leq \frac{n+1}{2}              (N_{\texttt{B}_1}(f) + 1) \enspace .
\]
\end{theorem}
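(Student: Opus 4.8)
The plan is to prove the two inequalities of the first display separately and then read off the incomparable case as an immediate corollary. Throughout I write $\texttt{A} \preceq \texttt{B}$ when $\texttt{B}$ is more expressive than $\texttt{A}$, which by the lattice structure means $\Delta_{\texttt{A}} \subseteq \Delta_{\texttt{B}}$. The single ingredient behind every \emph{lower} bound is the monotonicity of the node count along the lattice, already noted above: if $\Delta_{\texttt{A}} \subseteq \Delta_{\texttt{B}}$, then every reduction available in $\texttt{A}$ is also available in $\texttt{B}$, so the canonical $\texttt{B}$-graph collapses at least as many diamond nodes into letters as the canonical $\texttt{A}$-graph does (the terminals being unaffected), giving $N_{\texttt{B}}(f) \leq N_{\texttt{A}}(f)$. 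This is exactly the left inequality. Since $\ldds$ is the bottom of the lattice ($\Delta = \emptyset$), the same fact yields $N_{\texttt{A}}(f) \leq N_{\ldds}(f)$ for every model, which I will use to chain through the Shannon model.

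The heart of the theorem is the \emph{upper} bound $N_{\texttt{A}}(f) \leq \frac{n+1}{2}(N_{\texttt{B}}(f)+1)$. Using $N_{\texttt{A}}(f) \leq N_{\ldds}(f)$, I would first reduce it to $N_{\ldds}(f) \leq \frac{n+1}{2}(N_{\texttt{B}}(f)+1)$. For first-layer models this is precisely Knuth's inequality recalled above (the ROBDD case $\lddoU$ and the ZDD case $\lddoCuz$), but for a higher model it is strictly stronger: as $\texttt{B}$ becomes more expressive, $N_{\texttt{B}}$ shrinks and so does the right-hand side, so the bound is most binding at the top of the (non-negated) lattice. The key observation is that monotonicity transfers this bound \emph{downward}: if I establish the single inequality $N_{\ldds}(f) \leq \frac{n+1}{2}(N_{\lddoUC}(f)+1)$ for the most expressive non-negated model $\lddoUC$ (the join of all useless and canalizing letters, since the $\xx$ letter requires negation), then for any non-negated $\texttt{B} \preceq \lddoUC$ I have $N_{\lddoUC}(f) \leq N_{\texttt{B}}(f)$, whence $\frac{n+1}{2}(N_{\lddoUC}(f)+1) \leq \frac{n+1}{2}(N_{\texttt{B}}(f)+1)$ and the bound for $\texttt{B}$ follows. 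Thus the whole theorem collapses to Knuth's inequality for the single top model.

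To prove $N_{\ldds}(f) \leq \frac{n+1}{2}(N_{\lddoUC}(f)+1)$ I would adapt Knuth's counting argument to the unfolding of the canonical $\lddoUC$-graph $\phi$ into its $\ldds$ form. Eliminating every letter turns $\phi$ into the canonical Shannon graph, and (as recorded in the complexity discussion) each letter becomes a single reduced diamond node, so the number of Shannon diamond nodes equals that of $\phi$ plus the total length of the words labelling the edges of $\phi$. The naive bound on this total length, $(1+2\abs{\phi})n$, only gives a factor linear in $n$; obtaining the tight constant $\frac{n+1}{2}$ requires Knuth's sharper per-level amortization, in which the eliminated special variables along a root-to-terminal path are charged against the kept diamond nodes level by level. \textbf{The main obstacle} is to carry out this amortization uniformly when useless and canalizing reductions are present \emph{simultaneously}: one must show the combined collapses still spawn at most $\frac{n+1}{2}$ Shannon nodes per reduced node, while correctly accounting for the shared constant subgraphs to which the canalizing letters point. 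The constant-function case, where the bound is attained with equality, both confirms tightness and guides the bookkeeping.

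Finally, the \emph{Consequently} part is immediate from the first display. For incomparable $\texttt{B}_1,\texttt{B}_2$ both more expressive than $\texttt{A}$, the first display applied to the pair $(\texttt{A},\texttt{B}_1)$ gives $N_{\texttt{B}_1}(f) \leq N_{\texttt{A}}(f)$, while applied to $(\texttt{A},\texttt{B}_2)$ it gives $N_{\texttt{A}}(f) \leq \frac{n+1}{2}(N_{\texttt{B}_2}(f)+1)$; chaining the two yields $N_{\texttt{B}_1}(f) \leq \frac{n+1}{2}(N_{\texttt{B}_2}(f)+1)$, and the symmetric inequality follows by swapping the roles of $\texttt{B}_1$ and $\texttt{B}_2$.
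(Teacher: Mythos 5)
Your scaffolding is sound, and your reduction is in fact tidier than what the paper does: monotonicity of node counts along the lattice gives both $N_{\texttt{B}}(f) \leq N_{\texttt{A}}(f)$ and the chain $N_{\texttt{A}}(f) \leq N_{\ldds}(f)$, $N_{\lddoUC}(f) \leq N_{\texttt{B}}(f)$, so the whole first display would indeed follow from the single extremal inequality $N_{\ldds}(f) \leq \frac{n+1}{2}(N_{\lddoUC}(f)+1)$; and your ``Consequently'' step is exactly the paper's (transitivity through $\texttt{A}$). But the proposal stops precisely where the theorem lives: you never establish that extremal inequality. You correctly note that the naive accounting via word lengths ($\abs{\phi}_\texttt{S} \leq n + \abs{\phi} + 2n\abs{\phi}$) only gives a factor of order $2n$ rather than $\frac{n+1}{2}$, you announce that Knuth's per-level amortization must be adapted, and then you explicitly label that adaptation ``the main obstacle'' without resolving it. That obstacle is the entire content of the paper's proof, which, incidentally, argues directly for an arbitrary comparable pair $(\texttt{A},\texttt{B})$, making your reduction unnecessary: writing $a_k$ (resp.\ $b_k$) for the number of nodes at level $k$ in the $\texttt{A}$- (resp.\ $\texttt{B}$-) diagram, it proves two per-level bounds, namely $a_k \leq 1 + (b_0 + \dotsb + b_{k-1})$ --- the $\mu_k$ nodes of $\texttt{B}$ above level $k$ emit $2\mu_k$ branches, of which $\mu_k - 1$ are consumed connecting those nodes, leaving $1+\mu_k$ branches crossing level $k$, one per distinct subfunction of arity $n-k$ --- and $a_k \leq b_k + \dotsb + b_n$ --- each level-$k$ subfunction corresponds, after stripping its typed variables, to a $\texttt{B}$-node at some level $\geq k$. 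Adding the two gives $2a_k \leq N_{\texttt{B}}(f) + 1$ at each of the $n+1$ levels, and summing over $k$ yields the bound.

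To your credit, your instinct about where the difficulty sits is accurate: with several letters present simultaneously, the stripping map is no longer injective in the way it is in Knuth's single-rule setting (e.g., the distinct subfunctions $\delta_\uu g$ and $\delta_{\cc_{0\zero}} g$ at the same level both strip to the same node for $g$), and this is exactly the point that the paper's second per-level inequality must absorb via its ``up to extracting typed variables'' correspondence. So you identified the correct crux --- but identifying it is not proving it. As written, the core quantitative lemma is asserted rather than demonstrated, and the theorem remains unproved.
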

\begin{proof}
The reasoning to show these inequalities is essentially the same as in~\cite{knuth4A}. 
To count the size of the diagram, one counts the nodes at each level starting from level $0$ for the root node all the way down to level $n$ with the terminal nodes. 
For a level $k$ of the diagram, let $a_k$ (resp. $b_k$) denote its total number of nodes for model $\texttt{A}$ (resp. $\texttt{B}$). 
One shows that 
\[
    a_k \leq 1 + \underbrace{b_0 + \dotsb + b_{k-1}}_{\mu_k} \quad \text{and} \quad a_k \leq b_k + \dotsb + b_{n} \enspace. 
\]
The first inequality holds because, as soon as a variable is typed, its related node is removed from the diagram making its in-degree branches to go necessarily past level $k$. 
Since there are $\mu_k$ nodes above level $k$, the total number of branches leaving those nodes is $2 \mu_k$, among which $\mu_k - 1$ are used (to connect the $\mu_k$ nodes) thus leaving $1 + \mu_k$ branches that are connected to nodes below level $k$. 
Each branch corresponds necessarily to a function of arity $n-k$ and $a_k$ is precisely the number of distinct (sub)functions of arity $n-k$. Thus $a_k \leq 1 + \mu_k$.  

For the second inequality, every (sub)function at level $k$ corresponds necessarily to one function among the $b_k + \dotsb + b_{n}$ functions up to extracting its typed variables (if any). Summing up the two inequalities, one gets 
\[
    N_{\texttt{A}}(f) \leq \frac{n+1}{2} (N_{\texttt{B}}(f) + 1) \enspace .
\]
For the remaining inequality, since $\texttt{B}$ is more expressive than $\texttt{A}$, it is obvious that $N_{\texttt{B}}(f) \leq N_{\texttt{A}}(f)$ because some nodes get eventually removed from the diagram of $f$ in $\texttt{A}$. 
The inequalities comparing the sizes of $\texttt{B}_1$ and $\texttt{B}_2$ are obtained by transitivity via $\texttt{A}$.  
\end{proof}
Theorem~\ref{thm:size} gives a good estimate of the potential gain one may get when using more expressive models. 
It is interesting to note that, even for the most expressive model, the gain can only be linear in $n$ at most. 

One shows also that when negation is no longer supported, the size of the diagram can double: 
\[
    N_{\texttt{A}+\bullet}(f) \leq N_{\texttt{A}}(f) \leq 2 N_{\texttt{A}+\bullet}(f),  
\]
where $\texttt{A}+\bullet$ denotes the model $\texttt{A}$ extended with the negation functor $\delta_\neg$.  

The main drawback of such analysis is that it doesn't account for the size of the labels (or any other artefact) used to encode chains of typed variables. These choices have naturally an impact on the potential gain (see for instance the bounds reported in~\cite{CBDD} where the author used special nodes to encode typed variables).  

When labels are used, one can bound the overall size of all the labels by $(2N+1)n$ for a diagram with $N$ nodes representing a function defined over $n$ variables. Indeed each edge of the diagram has at most $n$ type symbols. 
The overall gain in the number of nodes may be in general compensated by the overhead induced by using labels. 
These practical considerations are however application and implementation dependent and are not discussed in this paper.  

\section{Related Work}\label{sec:relwork}

The $\ldd$-based classification differs from Darwiche's work~\cite{DarwicheClassification} that
uses relative compactness and absolute worst-time complexity of standard queries to classify representations of Boolean functions. 
Firstly, $\ldd$ is more fine grained.
With respect to Darwiche's classification, many important variants like \robdd{}, \robddn{}, \zdd{}, \chainD{}, \tbdd{} and \esr{} are mostly indistinguishable from the vanilla variant \sdd{} since they all fall in the same class.
Indeed, all $\ldd\texttt{-O}$ models we presented in this work handle several queries (e.g., EQUIVALENCE, SAT, \#SAT) in polytime, very much like \robdd{}. 
Secondly, unlike Darwiche's classification, the $\ldd$ framework focuses primarily on getting more (functionally) expressive canonical models in a principled way.

Functional Decision Diagrams (\fdd{}s), introduced by Kebschull et al.~\cite{KebschullFDD}, share some similarities with $\ldd$, starting with their names.
While, in both approaches, logic circuits are regarded (semantically) as Boolean functions,
only $\ldd$ regards reduction rules as functors operating on Boolean functions. 
Furthermore, $\ldd$ relies entirely on the Shannon operator (or combinator) to deconstruct Boolean functions whereas \fdd{} uses the positive Davio combinator. 
Nothing prevents using the latter in $\ldd$, and there is in fact a nice correspondence between the functors in both cases as detailed next. 

\subsection{Switching The Underlying Combinator}
\label{ax:fdd}

Recall that both Shannon and Davio combinators are \emph{universal} and \emph{elementary}:
universal means that any Boolean function can be expressed as a combination of constant functions;
elementary means that it operates on Boolean functions with the same arity by increasing the arity by $1$.

Becker and Drechsler~\cite{BeckerElemComb} identified a total of $12$ distinct universal and elementary combinators having the same form as Shannon's, except that the branching relies on an arbitrary function $p$ instead of the valuation of one (Boolean) variable.
By allowing output negation, they further reduced this number to only $3$, one of which is Shannon's;
the other two combinators are the positive and negative Davio combinators.
Recall that the (positive) Davio combinator is defined over Boolean functions of the same arity as
$f \star_{\texttt{D}^{+}} g$ as follows:
\[
    (f \star_{\texttt{D}^{+}} g): (x_0,x_{1},\dotsc,x_n) \mapsto f(x_{1},\dotsc,x_n) \oplus (x_0 \land g(x_{1},\dotsc,x_n)) \enspace .
\]

Devising a data structure where the combinator is a parameter would be very relevant to compare and better understand the benefits and drawbacks of switching the underlying combinator.
This would be a necessary first step towards a generic universal structure that allows even more complex combinators~\cite{ExtendedCofactoring,IntroBBDD,BertaccoDSD}.

In fact, Shannon-based reduction rules can be transposed into \emph{semantically equivalent} positive (or negative) Davio-based reduction rules.
To better appreciate this, let us detail an example.
To avoid any confusion, we use $\star_{\texttt{S}}$ below to denote the Shannon operator.
The following equation
\[
    f \star_{\texttt{S}} f = f \star_{\texttt{D}^{+}} 0
\]
holds for any Boolean function $f$ by definition of the combinators.
Hence, a useless variable for a positive Davio-based $\ldd$ would be syntactically captured by the
`same' introduction rule of the letter $\cc_{1\zero}$ we used for Shannon-based $\ldd$, leading to the following sameness relation denoted by `$\leftrightsquigarrow$'.
\[
    \inference{(\phi,n) \diamond_\texttt{S} (\blacksquare,n)}{\xrightarrow{\cc_{1\zero}} (\phi,n)} \leftrightsquigarrow \inference{(\phi,n) \diamond_{\texttt{D}^{+}} (\blacksquare,n)}{\xrightarrow{\uu} (\phi,n)}
\]
Table~\ref{tab:ShannonVsDavioReductionRule} summarizes this correspondence for the remaining elementary operators we have considered in this work.
This observation shows the flexibility of the $\ldd$ framework and settles the first steps towards extending it to support Davio operators.

\begin{table}
	\centering
	\caption{\label{tab:ShannonVsDavioReductionRule}Equivalence between Shannon-based reduction rules and Davio-based-reduction rules}
	\begin{tabular}{|ccc|}
		\hline
		$\texttt{S}$   & $\texttt{D}^{+}$ & $\texttt{D}^{-}$ \\ \hline
		$\uu$          & $\cc_{1\zero}  $ & $\cc_{1\zero}  $ \\
		$\xx$          & $\cc_{1\un}    $ & $\cc_{1\un}    $ \\ \hline
		$\cc_{0\zero}$ & $\cc_{0\zero}  $ & $\uu           $ \\
		$\cc_{0\un}$   & $\cc_{0\un}    $ & $\xx           $ \\ \hline
		$\cc_{1\zero}$ & $\uu           $ & $\cc_{0\zero}  $ \\
		$\cc_{1\un}$   & $\xx           $ & $\cc_{0\un}    $ \\ \hline
	\end{tabular}
\end{table}

\section*{Conclusion}
The functional point of view developed in this paper helps getting a better understanding of how different
existing variants of \bdd{} are related, by abstracting away several implementation details in order to solely focus on
how one constructs (or deconstructs) a Boolean function by adding (or removing) typed variable. 
This approach allowed us to propose a new data structure with clear functional semantics, and
to go beyond existing variants. 
We introduced a new model termed $\lddoNUCX$ that combines useless variables, \emph{all} canalizing variables and xor-variables while being invariant by negation. Its canonicity was formalized in the Coq proof assistant. 
More importantly, the approach we used could be very well instantiated using other elementary and arity-preserving operators that are application dependent achieving therefor a better compression rate. 


\bibliographystyle{siam}
\bibliography{refs}

\clearpage
\appendix
\section{Normalizing $\lddoNUC$}
\label{ax:norm-negation}

The procedure $[\cdot]$ below reduces inductively a $\lddoNUC$ graph. 
$\ell$ denotes an elementary operator in $\Delta = \{\uu,\cc_{1\un},\cc_{1\zero},\cc_{0\un},\cc_{0\zero}\}$. 


\begin{lstlisting}
let $\phi_0$ $\nc{\diamond}$ $\phi_1$ =
    match $\phi_0 \diamond \phi_1$ with
    | $\phi \diamond \phi$ $\camlarrow$ $\xrightarrow{\uu}  \phi$
    | $\phi \diamond (\xrightarrow{\bullet}(\blacksquare,n))$ $\camlarrow$ $\xrightarrow{\cc_{1\un}}  \phi$
    | $\phi \diamond (\blacksquare,n)$ $\camlarrow$ $\xrightarrow{\cc_{1\zero}} \phi$
    | $(\blacksquare,n) \diamond (\xrightarrow{\bullet}\phi)$ $\camlarrow$ $\xrightarrow{\bullet.\cc_{0\un}} \phi$
    | $(\blacksquare,n) \diamond \phi$ $\camlarrow$ $\xrightarrow{\cc_{0\zero}} \phi$
    | $\_$ $\camlarrow$ $\phi_0\diamond\phi_1$
\end{lstlisting}

\begin{lstlisting}
let $\nc{\bullet}\phi$ =
  match $\phi$ with 
  | $\xrightarrow{\bullet} \phi'$ $\camlarrow$ $\phi'$
  | $\_$ $\camlarrow$ $\xrightarrow{\bullet}\phi'$ 
\end{lstlisting}

\begin{lstlisting}
let $\phi_0$ $\ncn{\diamond}$ $\phi_1$ =
    match $\phi_0$ with
    | $(\xrightarrow{\bullet}\phi_0')$ $\camlarrow$ $\nc{\bullet}(\phi_0'$ $\nc{\diamond}$ $(\nc{\bullet}\phi_1))$
    | $\_$ $\camlarrow$ $\phi_0$ $\nc{\diamond}$ $\phi_1$
\end{lstlisting}

\begin{lstlisting}
let rec  $[\phi]$ =
  match $\phi$ with
  | $(\blacksquare,0)$ $\camlarrow$ $(\blacksquare,0)$ 
  | $(\square,0)$ $\camlarrow$ $\xrightarrow{\bullet}(\blacksquare,0)$  
  | $\xrightarrow{\bullet} \phi'$ $\camlarrow$ $\nc{\bullet} [\phi']$ 
  | $\xrightarrow{\ell} \phi'$ $\camlarrow$ 
    let $\psi_0 \diamond \psi_1 = \texttt{elim-}\ell (\phi')$ in $[\psi_0]\ \ncn{\diamond}\ [\psi_1]$
  | $\phi_0 \diamond \phi_1$ $\camlarrow$ $[\phi_0]\ \ncn{\diamond}\ [\phi_1]$
\end{lstlisting}



\end{document}